\newcommand{\NotNeeded}[1]{}
\newcommand{\FullVersion}[1]{}
\newcommand{\Subject}[1]{\subparagraph{#1.}}
\newcommand{\St}{~|~}
\newcommand{\tuple}[1]{\langle #1  \rangle}
\newcommand{\pair}{\tuple}
\newcommand{\Nat}{\ensuremath{\mathbb{N}}\xspace}
\newcommand{\Dual}[1]{\overline{#1}}
\newcommand{\ExiDerived}[1]{#1\triangledown}
\newcommand{\UniDerived}[1]{#1\!\!\vartriangle}
\newcommand{\Yearn}{\Func{yearn}}
\newcommand{\game}[2]{\ensuremath{\mathcal{G}(#1 , #2)}\xspace} 
\newcommand{\Sgame}[3]{\ensuremath{\mathcal{G}_{#3}(#1 , #2)}\xspace} 
\newcommand{\pro}[2]{\ensuremath{#1 \times #2}\xspace} 
\newcommand{\Func}[1]{{\mathsf{#1}}}
\newcommand{\BP}{\Func{B}^+}
\newcommand{\A}{{\cal A}}
\newcommand{\B}{{\cal B}}
\newcommand{\C}{{\cal C}}
\newcommand{\D}{{\cal D}}
\newcommand{\G}{{\cal G}}
\newcommand{\M}{{\cal M}}
\title{Good for Games Automata:\newline From Nondeterminism to Alternation} 
\titlerunning{Good for Games Automata}
\author{Udi Boker}{Interdisciplinary Center (IDC) Herzliya, Israel}{udiboker@gmail.com}{}{Israel Science Foundation grant 1373/16}
\author{Karoliina Lehtinen}{University of Liverpool, United Kingdom}{k.lehtinen@liverpool.ac.uk}{}{EPSRC grant EP/P020909/1  (Solving Parity Games in Theory and Practice)}
\authorrunning{U. Boker and K. Lehtinen}
\keywords{Good for games, history-determinism, alternation}
\begin{document}

\maketitle

\begin{abstract}
A word automaton recognizing a language $L$ is good for games (GFG) if its composition with any game with winning condition $L$ preserves the game's winner. While all deterministic automata are GFG, some nondeterministic automata are not. There are various other properties that are used in the literature for defining that a nondeterministic automaton is GFG, including ``history-deterministic'', ``compliant with some letter game'', ``good for trees'', and ``good for composition with other automata''. The equivalence of these properties has not been formally shown.

We generalize all of these definitions to alternating automata and show their equivalence.
We further show that alternating GFG automata are as expressive as deterministic automata with the same acceptance conditions and indices. We then show that alternating GFG automata over finite words, and weak automata over infinite words, are not more succinct than deterministic automata, and that determinizing B\"uchi and co-B\"uchi alternating GFG automata involves a $2^{\Theta(n)}$ state blow-up. We leave open the question of whether alternating GFG automata of stronger acceptance conditions allow for doubly-exponential succinctness compared to deterministic automata. 
\end{abstract}

\section{Introduction}
In general, deterministic automata have better compositional properties than nondeterministic automata, making them better suited for applications such as synthesis.
Unfortunately, determinization is complicated and involves an exponential state space blow-up.
Nondeterministic automata that are \textit{good for games} (GFG) have been heralded as a potential way to combine the compositionality of deterministic automata with the conciseness of nondeterministic ones. In this article we are interested in the question of whether the benefits of good-for-games automata extend to alternating automata, which, in general, can be double-exponentially more concise than deterministic automata.

The first hurdle of studying GFG alternating automata is to settle on definitions. Indeed, for nondeterminisic automata, several GFG-type properties have been invented independently under different names: good for games \cite{HP06}, good for trees \cite{KSV06}, and history-determinism \cite{Col09}. 

While Henzinger and Piterman introduced the idea of automata that compose well with games, in their technical development they preferred to use a \textit{letter game} such that one player having a winning strategy in this game implies that the nondeterministic automaton composes well with games \cite{HP06}.
In a similar vein, Kupferman, Safra and Vardi considered already in 1996 a form of nondeterministic automata that resolves its nondeterminism according to the past by looking at tree automata for derived word languages~\cite{KSV06}; this notion of \textit{good for trees} was shown to be equivalent to the letter game~\cite{BKKS13}.
Independently, Colcombet introduced history-determinism in the setting of nondeterministic cost automata \cite{Col09}, and later extended it to alternating automata \cite{Col13}. He showed that history-determinism implies that the automaton is suitable for composition with other alternating automata, a seemingly stronger property than just compositionality with games. Although Colcombet further developed history-determinism for cost automata, here we only consider automata with $\omega$-regular acceptance conditions.

As a result, in the literature there are at least five different definitions that characterize, imply, or are implied by a nondeterministic automaton composing well with games: composition with games, composition with automata, composition with trees, letter games and history-determinism. While some implications between them are proved, others are folklore, or missing. Furthermore, these definitions do not all generalize in the same way to alternating automata: compositionality with games and with automata are agnostic to whether the automaton is nondeterministic, universal or alternating, and hence generalize effortlessly to alternating automata; the letter-game and good-for-tree automata on the other hand generalize `naturally' in a way that treats nondeterminism and universality asymmetrically and hence need be adapted to handle alternation.

In the first part of this article, we give a coherent account of good-for-gameness for alternating automata: we generalize all the existing definitions from nondeterministic to alternating automata, and show them be equivalent. This implies that these are also equivalent for nondeterministic automata. While some of these equivalences were already folklore for nondeterministic automata, others are more surprising: compositionality with one-player games implies compositionality with two-player games and compositionality with automata, despite games being a special case of alternating automata and single-player games being a special case of games. We also show that in the nondeterministic case each definition can be relaxed to an asymmetric requirement: composition with universal automata  or universal trees is already equivalent to composition with alternating automata and games.

In the second part of this article, we focus on questions of expressiveness and succinctness. 
The first examples of GFG automata were built on top of deterministic automata \cite{HP06}, and Colcombet conjectured that history-deterministic alternating automata with $\omega$-regular acceptance conditions are not more concise than deterministic ones \cite{Col12}.
Yet, this has since been shown to be false: already GFG nondeterministic B\"uchi automata cannot be pruned into deterministic ones \cite{BKKS13} and co-B\"uchi automata can be exponentially more concise than deterministic ones \cite{KS15}. In general, nondeterministic GFG automata are in between nondeterministic and deterministic automata, having some properties from each \cite{BKS17}.  

Whether GFG alternating automata can be double exponentially more concise than deterministic automata is  particularly interesting in the wake of quasi-polynomial algorithms for parity games.
Indeed, since 2017 when Calude et al. brought down the upper bound  for solving parity games from subexponential to quasi-polynomial~\cite{CJKLS17}, the automata-theoretical aspects of solving parity games with quasi-polynomial complexity have been studied in more depth~\cite{Leh18,BL19,BL18,DJL19,BC18,CDFJLP19,CF19}. In particular,
 Boja\'nczyk and Czerwi\'nski \cite{BC18}, and Czerwi{\'n}ski et al.~\cite{CDFJLP19} describe the quasi-polynomial algorithms for solving parity games explicitly in terms of deterministic word automata that separate some word languages. A polynomial deterministic \textit{or GFG} safety separating automaton for these languages would imply a polynomial algorithm for parity games. However, it is shown in \cite{CDFJLP19} that the smallest possible such \textit{nondeterministic} automaton is quasi-polynomial. As this lower bound only applies for nondeterministic automata, it is interesting to understand whether alternating GFG automata could be more concise.

As for expressiveness, we show that alternating GFG automata are as expressive as deterministic automata with the same acceptance conditions and indices. The proof extends the technique used in the nondeterministic setting \cite{BKKS13}, producing a deterministic automaton by taking the product of the automaton and the two transducers that model its history-determinism.

Regarding succinctness, we first show that GFG automata over finite words, as well as weak automata over infinite words, are not more succinct than deterministic automata. The proof builds on the property that minimal deterministic automata of these types have exactly one state for each Myhill-Nerode equivalence class, and an analysis that GFG automata of these types must also have at least one state for each such class.  

We proceed to show that determinizing B\"uchi and co-B\"uchi alternating GFG automata involves a $2^{\theta(n)}$ state blow-up. 
The proof in this case is more involved, going through two main lemmas. The first shows that for alternating GFG B\"uchi automata, a history-deterministic strategy need not remember the entire history of the transition conditions, and can do with only remembering the prefix of the word read. The second lemma shows that the breakpoint (Miyano-Hayashi) construction, which is used to translate an alternating B\"uchi automaton into a nondeterministic one, preserves GFGness.
We leave open the question of whether alternating GFG automata of stronger acceptance conditions allow for doubly-exponential succinctness compared to deterministic automata. 

\section{Preliminaries}\label{sec:Preliminaries}

\Subject{Words and automata}
An \emph{alphabet} $\Sigma$ is a finite nonempty set of letters, a finite (resp.\ infinite) \emph{word} $u=u_0 \ldots u_k\in \Sigma^{*}$ (resp.\ $w=w_0 w_1\ldots\in \Sigma^{\omega}$) is a finite (resp.\ infinite) sequence of letters from $\Sigma$. 
A \emph{language} is a set of words, and the empty word is written $\epsilon$.
 
An \emph{alternating word automaton} is $\A=(\Sigma,Q,\iota,\delta,\alpha)$, where $\Sigma$ is a finite nonempty alphabet, $Q$ is a finite nonempty set of states, $\iota\in Q$ is an initial state, $\delta:Q\times \Sigma \rightarrow \BP(Q)$ is a transition function where $\BP(Q)$ is the set of positive boolean formulas (\emph{transition conditions}) over $Q$, and $\alpha$ is an acceptance condition, on which we elaborate below.
For a state $q\in Q$, we denote by $\A^q$ the automaton that is derived from $\A$ by setting its initial state to $q$. 
$\A$ is nondeterministic (resp.\ universal) if all its transition conditions are disjunctions (resp.\ conjunctions), and it is deterministic if all its transition conditions are states. 

There are various acceptance conditions, defined with respect to the set of states that (a path of) a run of $\A$ visits. Some of the acceptance conditions are defined on top of a labeling of $\A$'s states. In particular, the parity condition is a labeling $\alpha:Q \to \Gamma$, where $\Gamma$ is a finite set of priorities and a path is accepting if and only if the highest priority seen infinitely often on it is even.
A B\"uchi condition is the special case of the parity condition where $\Gamma=\{1,2\}$; states of priority $2$ are called \textit{accepting} and of priority $1$ \textit{rejecting}, and then $\alpha$ can be viewed as the subset of accepting states of $Q$. Co-B\"uchi automata are dual, with $\Gamma=\{0,1\}$. A weak automaton is a B\"uchi automaton in which every strongly connected component of the graph induced by the transition function consists of only accepting or only rejecting states.

In Sections~\ref{sec:Preliminaries}-\ref{sec:Expressiveness}, we handle automata with arbitrary $\omega$-regular acceptance conditions, and thus consider $\alpha$ to be a mapping from $Q$ to a finite set $\Gamma$, on top of which some further acceptance criterion is implicitly considered (as in the parity condition). In Section~\ref{sec:Succinctness}, we focus on weak, B\"uchi, and co-B\"uchi automata, and then view $\alpha$ as a subset of $Q$. 

\newcommand{\GL}{C} 
\Subject{Games}
A \emph{finite $\Sigma$-arena} is a finite $\Sigma\times \{A,E\}$-labeled Kripke structure. An \emph{infinite $\Sigma$-arena} is an infinite $\Sigma\times \{A,E\}$-labeled tree. Nodes with an $A$-label are said to belong to Adam; those with an $E$-label are said to belong to Eve. We represent a  $\Sigma$-arena as $R=(V,X,V_E,\GL)$, where $V$ is its set of nodes, $X\subseteq V\times V$ its transitions, $V_E\subseteq V$ the $E$-labeled nodes, $V\setminus V_E$ the $A$-labeled nodes and $\GL:V\to\Sigma$ its $\Sigma$-labeling function. We will assume that all states have a successor.
An arena might be rooted at an initial position $v_\iota\in V$.

A play in $R$ is an infinite path in $R$. 
A \emph{game} is a $\Sigma$-arena together with a winning condition $W\subseteq \Sigma^\omega$. A play $\pi$ is said to be winning for Eve in 
the game if the $\Sigma$-labels along $\pi$ form a word in $W$. Else $\pi$ is winning for Adam.

A \emph{strategy} for Eve (Adam, resp.) is a function $\tau:V^*\rightarrow V$ that maps a history $v_0\ldots v_i$, i.e. a finite prefix of a play in $R$, to a successor of $v_i$ whenever $v_i\in V_E$ ($v_i\notin V_E$). A play $v_0v_1\dots$ agrees with a strategy $\tau$ for Eve (Adam) if whenever $v_{i}\in V_E$ ($v_i\notin V_E$), we have $v_{i+1}=\tau(v_i)$. A strategy for Eve (Adam) is winning from a position $v\in V$ if all plays beginning in $v$ that agree with it are winning for Eve (Adam). We say that a player wins the game from a position $v\in V$ if they have a winning strategy from $v$. 
If the game is rooted at $v_\iota$, we say that a player wins the game if they win from $v_\iota$.

All the games we consider have $\omega$-regular winning conditions and are therefore determined and the winner has a finite-memory strategy~\cite{BL69}.
Finite-memory strategies can be modeled by {\em transducers}. Given alphabets $I$ and $O$, an \emph{$(I/O)$-transducer} is a tuple ${\M}=(I,O,M,\iota,\rho,\chi)$, where  $M$ is a finite set of states (memories), $\iota\in M$ is an initial memory, $\rho:M\times I \to M$ is a deterministic transition function, and $\chi:M\to O$ is an output function. The strategy $\M:I^*\to O$ is  obtained by following $\rho$ and $\chi$ in the expected way: we first extend $\rho$ to words in $I^\ast$ by setting $\rho(\epsilon)=\iota$ and $\rho(u \cdot a)=\rho(\rho(u),a)$, and then define $\M(u)=\chi(\rho(u))$.

\Subject{Products}
\begin{definition}[Synchronized product]\label{def:GameAutomataProd}
The \emph{synchronized product} $\pro{R}{\A}$ between a $\Sigma$-arena $R=(V,X,V_E, \C)$ and an alternating automaton $\A=(\Sigma,Q,\iota,\delta,\alpha)$ with mapping $\alpha:Q \to \Gamma$ is a $\Gamma \cup \{\bot\}$-arena of which the states are $\pro{V}{\BP(Q)}$ and the successor relation is defined by:
\begin{itemize}
\item $(v,q)$, for a state $q$ of $Q$, has successors $(v',\delta(q,\C(v')))$ for each successor $v'$ of $v$ in $R$;
\item $(v,b\wedge b')$ and $(v,b\vee b')$ have two successors $(v,b)$ and $(v,b')$;
\item If $R$ is rooted at $v$ then the root of $R\times \A$ is $(v, \delta(\iota,\C(v)))$.
\end{itemize}
The positions belonging to Eve are $(v,b)$ where either $b$ is a disjunction, or $b$ is a state in $Q$ and $v\in V_E$.
The label of $(v,b)$ is $\alpha(b)$ if $b$ is a state of $Q$, and $\bot$ otherwise.
\end{definition}

An example, without labeling, of a synchronized product is given in Figure~\ref{fig:Product}.
\begin{figure}
\input{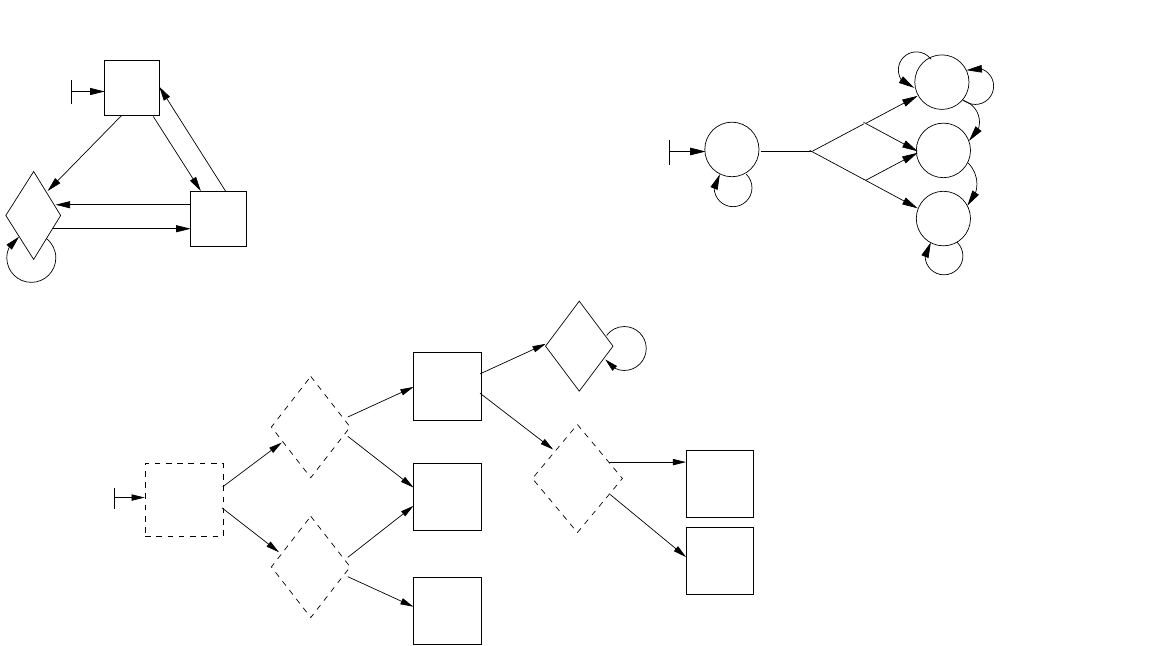_t} \caption{An example of a product between an alternating automaton and a finite arena.}\label{fig:Product}
\end{figure}
\begin{definition}[Automata composition]\label{def:prod}

Given alternating automata $\B=(\Sigma,Q^\B, \iota^\B,\delta^\B,\beta: Q^\B \rightarrow \Gamma)$ and $\A=(\Gamma, Q^\A,\iota^\A,\delta^\A,\alpha)$, their \emph{composition} $\pro{\B}{\A}$ consists of the synchronized product automaton $(\Sigma,Q^\B\times Q^\A,(\iota^\B,\iota^\A),\delta,\alpha')$, where $\alpha'(q^B,q^A)=\alpha(q^A)$ and $\delta((q^\B, q^\A), a)$ consists of  $f(\delta^\B(q^B,a),q^\A)$ where:

\begin{multicols}{2}
\begin{itemize}
\item $f(c\vee c',q)=f(c,q)\vee f(c',q)$
\item $f(c\wedge c',q)=f(c,q) \wedge f(c',q)$
\item $f(q',q)=g(q',\delta^\A(q,\beta(q'))$ where
\item $g(q,c\vee c')=g(q,c)\vee g(q,c')$
\item $g(q,c\wedge c')=g(q,c)\wedge g(q,c')$
\item $g(q,q')=(q,q')$.
\end{itemize}
\end{multicols}

\end{definition}
Note that this stands for first unfolding the transition condition in $\B$ and then the transition condition in $\A$, and it is equivalent to the following substitution, which matches Colcombet's notation \cite{Col13}:
$\delta^\B(q^\B,a)[q\in Q^\B \leftarrow \delta^\A(q^\A, \beta(q))[p\in Q^\A \leftarrow (q,p)]]$.
\Subject{Acceptance of a word by an automaton} We define the acceptance directly in terms of the model-checking (membership) game, which happens to be exactly the product of the automaton with a path-like arena describing the input word. More precisely, $\A$ accepts a word $w$ if and only if Eve wins the \emph{model-checking game} $\game{w}{\A}$, defined as the product $\pro{R_w}{\A}$, where the arena $R_w$ consists of an infinite path, of which all positions belong to Eve (although it does not matter), and the label of the $i^{\textit{th}}$ position is the $i^{\textit{th}}$ letter of $w$. The language of an automaton $\A$, denoted by $L(\A)$, is the set of words that it accepts (recognizes). Two automata are equivalent if they recognize the same language.
We will refer to the positions of the arena $R_w$ by the suffix of $w$ that labels the path starting there. In particular, this gives us a finite representation of the arena for ultimately periodic words. We further denote by $\Sgame{w}{\A}{\tau}$ the model-checking game $\game{w}{\A}$ restricted to moves that agree with the strategy $\tau$ of Adam or Eve.

\section{Good for Games Automata: Five Definitions}\label{sec:GfgDefinitions}
We clarify the definitions that are used in the literature for ``good for gameness'', while generalizing them from a nondeterministic to an alternating word automaton $\A=(\Sigma,Q,\iota,\delta,\alpha)$.

\Subject{Good for game composition}
The first definition matches the intuition that ``$\A$ is good for playing games''. It was given in \cite{HP06} for nondeterministic automata and applies as is to alternating automata, by properly defining the synchronized product of a game and an alternating automaton. (Definition~\ref{def:GameAutomataProd}.)
We prove in Section~\ref{sec:Equivalence} that it is equivalent when speaking of only one-player finite-arena games and two-player finite/infinite-arena games.

\begin{definition}[Good for game composition]\label{def:GFGameComposition}
$\A$ is \emph{good for game composition} if for every [one-player] game $G$ with a [finite] $\Sigma$-labeled arena and winning condition $L(\A)$, Eve has a winning strategy in $G$ if and only if she has a winning strategy in the synchronized-product game $\pro{G}{\A}$.
\end{definition}

\Subject{Compliant with the letter games}
The first definition is simple to declare, but is not convenient for technical developments. Thus, Henzinger and Piterman defined the ``letter-game'', our next definition, while independently Colcombet defined history-determinism, which we provide afterwards. 
The two latter definitions are easily seen to be equivalent and they were shown in~\cite{HP06} to imply the game composition definition. We are not aware of a full proof of the other direction in the literature; we include one in this article.

In the letter-game for nondeterministic automata \cite{HP06}, Adam generates a word letter by letter, and Eve has to resolve the nondeterminism ``on the fly'', so that the generated run of $\A$ is accepting if Adam generates a word in the language.
It has not been generalized yet to the alternating setting, and there are various ways in which it can be generalized, as it is not clear who should pick the letters and how to resolve the nondeterminism and universality. It turns out that a generalization that works well is to consider two independent games, one in which Eve resolves the nondeterminism while Adam picks the letters and resolves the universality, and another in which Eve picks the letters. 
\begin{definition}[Compliant with the letter games]\label{def:GFLetterGame}
There are two letter games, Eve's game and Adam's game.
\begin{description}
\item[Eve's game:] 
A configuration is a pair $(\sigma, b)$ where $b\in \BP(Q)$ is a transition condition and $\sigma\in\Sigma\cup\{\epsilon\}$ is a letter. (We abuse $\epsilon$ to also be an empty letter.)
A play begins in $(\sigma_0,b_0)=(\epsilon,\iota)$ and consists of an infinite sequence of configurations $(\sigma_0,b_0)(\sigma_1,b_1)\ldots$. 
In a configuration $(\sigma_i,b_i)$, the game proceeds to the next configuration $(\sigma_{i+1},b_{i+1})$ as follows.  
\begin{itemize}
\item If $b_i$ is a state of $Q$, Adam picks a letter $a$ from $\Sigma$, having $(\sigma_{i+1},b_{i+1})=(a,\delta(b_i,a))$.
\item If $b_i$ is a conjunction $b_i=b' \land b''$, Adam chooses between $(\epsilon,b')$ and $(\epsilon,b')$.
\item If $b_i$ is a disjunction $b_i=b' \lor b''$, Eve chooses between $(\epsilon,b')$ and $(\epsilon,b')$.
\end{itemize}
In the limit, a play consists of an infinite sequence $\pi = b_0,b_1,\ldots$ of transition conditions and an infinite word $w$ consisting of the concatenation of $\sigma_0,\sigma_1,\ldots$. Let $\rho$ be the restriction of $\pi$ to transition conditions that are states of $Q$. Eve wins the play if either $w\notin L(\A)$ or $\rho$ satisfies $\A$'s acceptance condition.

The \emph{nondeterminism in $\A$ is compliant with the letter games} if Eve wins this game.

\item[Adam's game] is similar to Eve's game, except that Eve chooses the letters instead of Adam, and Adam wins if either $w\in L(\A)$ or $\rho$ does not satisfy $\A$'s acceptance condition. The \emph{universality in $\A$ is compliant with the letter games} if Adam wins this game.
\end{description}
$\A$ is \emph{compliant with the letter games} if its nondeterminism and universality are compliant with the letter games.

\end{definition}
 Observe that we need both games: universal automata are trivially compliant with Eve's letter game and  nondeterministic automata are trivially compliant with Adam's letter game.

\Subject{History-determinism}
A nondeterministic automaton is history-deterministic \cite{Col09} if there is a strategy\footnote{In Section~\ref{sec:Preliminaries}, we formally defined a ``strategy'' with respect to a specific game. Here we abuse the term ``strategy'' to refer to a general total function on finite words.} to resolve the nondeterminism that only depends on the word read so far, i.e., that is uniform for all possible futures. Colcombet generalized this to alternating automata \cite{Col13} by considering such strategies for both players.

We first define how to use a strategy $\tau: (\pro{\Sigma}{\BP(Q)})^*\rightarrow \BP(Q)$ for playing in a model-checking game $\game{w}{\A}$, as the history domains are different. Recall that in the model-checking game $\game{w}{\A}$, positions consist of  a suffix of $w$ and a transition condition of $\A$ so histories have type $(\pro{\Sigma^{\omega}}{\BP(Q)})^*$. From such a history $h$, let $h'$ be the history obtained by only keeping the first letter of the $\Sigma^{\omega}$ component of $h$'s elements, that is, the letter at the head of the current suffix. Then, we extend $\tau$ to operate over the $(\pro{\Sigma^{\omega}}{\BP(Q)})^*$ domain, by defining $\tau(h)=\tau(h')$.

For convenience, we often refer to a history in  $(\pro{\Sigma}{\BP(Q)})^*$, as a pair in $\pro{\Sigma^*}{(\BP(Q))^*}$.

\begin{definition}[History-determinism \cite{Col13}]\label{def:HistoryDet}\
\begin{itemize}
\item The \emph{nondeterminism in $\A$ is history-deterministic} if there is a strategy $\tau_E: (\pro{\Sigma}{\BP(Q)})^*\rightarrow \BP(Q)$ such that for all $w\in L(\A)$, $\tau_E$ is a winning strategy for Eve in $\game{w}{\A}$.  
\item The \emph{universality in $\A$ is history-deterministic} if there is a strategy $\tau_A: (\pro{\Sigma}{\BP(Q)})^*\rightarrow \BP(Q)$ such that for all $w\notin L(\A)$, $\tau_A$ is a winning strategy for Adam in $\game{w}{\A}$.
\item $\A$ is \emph{history-deterministic} if its nondeterminism and universality are history-deterministic.
\end{itemize}
\end{definition}

\Subject{Good for automata composition}
The next definition comes from Colcombet's proof that alternating history-deterministic automata behave well with respect to composition with other alternating automata. 
We shall show in Section~\ref{sec:Equivalence} that it also implies proper compositionality with tree automata, and that for nondeterministic automata, it is enough to require compositionality with universal, rather than alternating, automata.

\begin{definition}[Good for automata composition \cite{Col13}]\label{def:GFAutomataComposition}
$\A$ is \emph{good for automata composition} if for every alternating word (or tree) automaton $\B$ with $\Sigma$-labeled states and acceptance condition $L(\A)$, the language of the composed automaton $\pro{\B}{\A}$ is equal to the language of $\B$.
\end{definition}

\Subject{Good for trees}
The next definition comes from the work in \cite{KSV06,BKKS13} on the power of nondeterminism in tree automata. It states that a nondeterministic word automaton $\A$ is good-for-trees if we can ``universally expand'' it to run on trees and accept the ``universally derived language'' $\UniDerived{L(\A)}$---trees all of whose branches are in the word language of $\A$. 

Observe that every universal word automaton for a language $L$ is trivially good for $\UniDerived{L}$. Therefore,
for universal automata, we suggest that a dual definition is more interesting: its ``existential expansion to trees'' accepts $\ExiDerived{L}$---trees in which there exists a path in $L$. 

For an alternating automaton $\A$, we generalize the good-for-trees notion to require that $\A$ is good for both $\UniDerived{L(\A)}$ and $\ExiDerived{L(\A)}$, when expanded universally and existentially, respectively.

We first formally generalize the definition of ``expansion to trees'' to alternating automata. 
The \emph{universal (resp.\ existential) expansion} of $\A$ to trees is syntactically identical to $\A$, while its semantics is to accept a tree $t$ if and only if Eve wins the game $t \times \A$, when $t$ is viewed as a game in which all nodes belong to Adam (resp.\ Eve).

\begin{definition}[Good for trees]\label{def:GFTrees}
$\A$ is \emph{good for trees} if its universal- and existential-expansions to trees recognize the tree languages $\UniDerived{L(A)}$ and $\ExiDerived{L(A)}$, respectively.
\end{definition}

\section{Equivalence of All Definitions}\label{sec:Equivalence}

We prove in this section the equivalence of all of the definitions in the alternating setting, as given in Section~\ref{sec:GfgDefinitions}, which implies their equivalence also in the nondeterministic (and universal) setting. We may therefore refer to an automaton as \emph{good-for-games (GFG)} if it satisfies any of these definitions. In some cases, we provide additional equivalences that only apply to the nondeterministic setting.

\begin{theorem}\label{thm:equivalence}
Given an alternating automaton $\A$, the following are equivalent:
\begin{enumerate}
\item $\A$ is good for game composition (Def. \ref{def:GFGameComposition}).
\item $\A$ is compliant with the letter games (Def. \ref{def:GFLetterGame}).
\item $\A$ is history-deterministic (Def. \ref{def:HistoryDet}).
\item $\A$ is good for automata composition (Def. \ref{def:GFAutomataComposition}).
\item $\A$ is good for trees (Def. \ref{def:GFTrees}).
\end{enumerate}
\end{theorem}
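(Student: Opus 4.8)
The plan is to route every implication through condition~(3): I would show that history-determinism implies each of the other four conditions by the known constructions, and that each converse reduces to a single hard implication, namely that one-player finite-arena game composition already forces history-determinism. The surprising strengthening in Definition~\ref{def:GFGameComposition} (that one-player finite arenas suffice) then comes for free, since it is exactly this hard implication that closes the cycle.

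The forward directions from (3) are the routine ones. For $(3)\Leftrightarrow(2)$, Eve's letter game and the nondeterminism side of Definition~\ref{def:HistoryDet} describe the same object --- a strategy resolving $\A$'s disjunctions as a function of the history --- and the escape clause ``$w\notin L(\A)$'' in the letter game matches the restriction that $\tau_E$ need only win $\game{w}{\A}$ for $w\in L(\A)$; the universality side is dual, via Adam's letter game and $\tau_A$. For $(3)\Rightarrow(1)$ over an arbitrary game $G$ with winning condition $L(\A)$, I combine a winning $G$-strategy with $\tau_E$ to win $\pro{G}{\A}$, and, using determinacy of the $\omega$-regular game $G$, combine Adam's winning $G$-strategy with $\tau_A$ in the other case; this is the direction of~\cite{HP06} and needs no finiteness. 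For $(3)\Rightarrow(4)$ I use Colcombet's argument, transferring strategies between $\B$'s acceptance game and $\pro{\B}{\A}$ branch by branch with $\tau_E$ and $\tau_A$, which is sound precisely because these strategies read only the $\Sigma$-labels. For $(3)\Rightarrow(5)$, the universal (resp.\ existential) expansion is the product of $\A$ with an all-Adam (resp.\ all-Eve) tree; $\tau_E$ witnesses completeness of the universal expansion (Eve accepts every branch of a tree in $\UniDerived{L(\A)}$) and $\tau_A$ witnesses soundness of the existential expansion (Adam refutes every tree whose branches all avoid $L(\A)$), the two remaining directions holding for any automaton.

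For the converses it suffices to reach (3). The implication $(2)\Rightarrow(3)$ is the easy half noted above; $(4)\Rightarrow(3)$ follows because a finite one-player arena with condition $L(\A)$ is literally a finite (nondeterministic or universal) automaton $\B$ over a one-letter alphabet for which $L(\pro{\B}{\A})$ relates to $L(\B)$ exactly as Eve-wins-$\pro{G}{\A}$ relates to Eve-wins-$G$, so $(4)$ yields the one-player instance of $(1)$; and $(1)$ for general arenas trivially gives its one-player finite restriction. Everything therefore hinges on $(1)$ restricted to one-player finite arenas implying $(3)$, which I would prove together with its tree-shaped twin $(5)\Rightarrow(3)$.

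This last step is the main obstacle. I would argue by contradiction on the nondeterminism side: reducing via $(2)\Leftrightarrow(3)$ to Eve winning her letter game, and using that this $\omega$-regular game is finite-memory determined, I assume a finite-memory winning strategy $\sigma_A$ for Adam. From $\sigma_A$ I build a regular $\Sigma$-tree $t$ whose nodes record the history of Eve's disjunction choices and whose labels are the letters $\sigma_A$ emits, so that every branch of $t$ lies in $L(\A)$ and hence $t\in\UniDerived{L(\A)}$; since $t$ is regular it is the unfolding of a finite one-player arena, so the one-player finite case of $(1)$ (equivalently good-for-trees) forces Eve to win the universal expansion $\pro{t}{\A}$. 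The crux is then to translate Eve's winning strategy there back into the letter game against $\sigma_A$: I let Adam, who owns the tree navigation and $\A$'s conjunctions in the universal expansion, \emph{mimic} $\sigma_A$ by navigating $t$ along Eve's disjunction history and copying $\sigma_A$'s conjunction choices --- a legitimate history-based Adam strategy precisely because $t$ was branched by those choices. Against this Adam, Eve's run is accepting, contradicting that $\sigma_A$ forces rejection. I expect the real work to lie in making this role-matching rigorous without conflating the two layers of disjunction choices, and in checking that regularity of $\sigma_A$ genuinely confines the construction to a finite one-player arena, which is what makes the one-player finite fragment of Definition~\ref{def:GFGameComposition} as strong as full history-determinism. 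The universality side is symmetric, through Adam's letter game and the existential expansion.
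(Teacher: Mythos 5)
Your proposal is correct and takes essentially the same approach as the paper: the equivalence of the letter games and history-determinism, the strategy-combination argument for game composition, the one-letter-automaton encoding relating automata composition to games, and, crucially, the same key construction for the hard direction, namely turning a finite-memory winning strategy of the opponent in the letter game (via its transducer) into a finite one-player arena whose unfolding is your regular tree, on which composition must fail. The only cosmetic differences are that you route all implications through history-determinism as a hub and phrase the hard step in tree language, whereas the paper states it directly for one-player games (Lemma~\ref{lem:GFfiniteImpliesLetter}) and resolves exactly the role-matching concern you flag by first normalizing transition conditions to CNF (Proposition~\ref{prop:TransitionConditionFormat}), so that each round of Eve's letter game decomposes cleanly into Adam's letter-and-clause output and Eve's state choice.
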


\begin{proof}\
\begin{itemize}
\item Lemma~\ref{lem:letter-hd}: History-determinism $\Leftrightarrow$ compliance with the letter games. (Def.~\ref{def:GFLetterGame} $\Leftrightarrow$ Def.~\ref{def:HistoryDet}).

\item Lemma~\ref{lem:LetterImpliesGfg}: Compliance with the letter games $\Rightarrow$ compositionality with arbitrary games. (Def.~\ref{def:GFLetterGame} $\Rightarrow$ ``strong'' Def.~\ref{def:GFGameComposition}). 

\item Lemma~\ref{lem:GFfiniteImpliesLetter}: Compositionality, even with just one-player finite-arena games $\Rightarrow$ compliance with the letter games. (``weak'' Def.~\ref{def:GFGameComposition} $\Rightarrow$ Def.~\ref{def:GFLetterGame}). 

\item Lemma~\ref{lem:GoodForTrees}: Good for trees is $\Leftrightarrow$ compositionality with one-player games.\\(Def.~\ref{def:GFTrees} $\Leftrightarrow$ ``medium'' Def.~\ref{def:GFGameComposition}).

\item Lemma~\ref{lem:AutomataComposition}: Compositionality with games $\Leftrightarrow$  compositionality with automata.\\(Def.~\ref{def:GFAutomataComposition} $\Leftrightarrow$ Def.~\ref{def:GFGameComposition}).
\end{itemize}
\end{proof}

We start with the simple equivalence of history-determinism and compliance with the letter game. (Observe that the letter-game strategies are of the same type as the strategies that witness history-determinism: a function from $(\pro{\Sigma}{\BP(Q)})^*$ to $\BP(Q)$.)
\begin{lemma}\label{lem:letter-hd}
Consider an alternating automaton $\A=(\Sigma,Q,\iota,\delta,\alpha)$.
\begin{itemize}
\item A strategy $\tau_E$ for Eve in her letter game is winning if and only if it witnesses the history-determinism of the nondeterminism in $\A$.
\item  A strategy $\tau_A$ for Adam in his letter game is winning if and only if it witnesses the history-determinism of the universality in $\A$. 
\item An alternating automaton $\A$ is history-deterministic if and only if it is compliant with the letter games.
\end{itemize}
\end{lemma}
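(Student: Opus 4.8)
The plan is to prove the first two bullet points by exhibiting a structure-preserving correspondence between plays of the letter games and plays of the model-checking games $\game{w}{\A}$, and then to obtain the third bullet as an immediate consequence of the first two, since both ``history-deterministic'' (Definition~\ref{def:HistoryDet}) and ``compliant with the letter games'' (Definition~\ref{def:GFLetterGame}) are conjunctions of a nondeterminism part and a universality part.

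For the first bullet I would begin by making explicit why a single $\tau_E\colon(\pro{\Sigma}{\BP(Q)})^*\to\BP(Q)$ can be read both as an Eve-strategy in her letter game and, via the translation $h\mapsto h'$ preceding Definition~\ref{def:HistoryDet}, as an Eve-strategy in every $\game{w}{\A}$. The crucial observation is that in both settings the only information $\tau_E$ consults is the pair consisting of the word-prefix read so far and the sequence of transition conditions visited so far: in the letter game this is the sequence of configurations with the $\epsilon$-components absorbed into the word, and in $\game{w}{\A}$ it is obtained from the history by keeping the head of each suffix. The move structure coincides as well, as in both games Eve resolves disjunctions, Adam resolves conjunctions, and each state-position consumes exactly one letter while boolean resolutions consume none. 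Hence a letter-game play consistent with $\tau_E$ in which Adam generates a word $w$ and resolves the conjunctions in a given way corresponds bijectively to a play of $\game{w}{\A}$ consistent with $\tau_E$ in which Adam resolves the conjunctions the same way, and corresponding plays induce the same run $\rho$ on the states of $Q$.

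With this correspondence the equivalence is a matter of tracking the two winning conditions. For ($\Leftarrow$): assume $\tau_E$ witnesses history-determinism, i.e.\ it is winning in $\game{w}{\A}$ for every $w\in L(\A)$; given any letter-game play consistent with $\tau_E$ in which Adam produces $w$, either $w\notin L(\A)$ and Eve wins by the escape clause, or $w\in L(\A)$ and the corresponding play of $\game{w}{\A}$ is won by Eve, so its run satisfies $\alpha$ and Eve again wins. For ($\Rightarrow$): assume $\tau_E$ is winning in the letter game and fix $w\in L(\A)$; any play of $\game{w}{\A}$ consistent with $\tau_E$ corresponds to a letter-game play in which Adam generates exactly $w$, which Eve wins, and since $w\in L(\A)$ rules out the escape clause its run satisfies $\alpha$; as this holds for all of Adam's conjunction resolutions, $\tau_E$ is winning in $\game{w}{\A}$. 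The second bullet is the dual argument: Adam's letter game and the family $\{\game{w}{\A}\mid w\notin L(\A)\}$ are related by the same correspondence, with Eve now generating the word and resolving disjunctions, Adam resolving conjunctions via $\tau_A$, and the escape clause ``$w\in L(\A)$'' playing the role dual to ``$w\notin L(\A)$''.

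Finally, for the third bullet I would simply combine the first two: $\A$ is compliant with the letter games iff Eve has a winning strategy in her game and Adam has one in his, and $\A$ is history-deterministic iff there is a witnessing $\tau_E$ for its nondeterminism and a witnessing $\tau_A$ for its universality; since the first bullet identifies the winning Eve-strategies with the nondeterminism witnesses and the second identifies the winning Adam-strategies with the universality witnesses, the two conjunctions coincide. I expect the only delicate step to be the careful verification of the play correspondence---in particular that the translation $h\mapsto h'$ makes $\tau_E$ (resp.\ $\tau_A$) issue identical choices in the letter game and in each model-checking game, even though their histories live a priori in different sets---after which both equivalences reduce to the bookkeeping of the two escape clauses.
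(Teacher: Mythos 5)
Your proof is correct and follows essentially the same route as the paper's: both establish the correspondence between letter-game plays (with Adam/Eve generating a word $w$) and plays of $\game{w}{\A}$ consistent with the same strategy, and then reduce each equivalence to bookkeeping of the escape clauses, with the paper phrasing the converse direction as a contradiction where you argue it directly. Your explicit treatment of the history translation $h\mapsto h'$ is a detail the paper glosses over, but it is not a different approach.
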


\begin{proof}
For the first direction, we assume that the nondeterminism in $\A$ is history-deterministic, witnessed by a strategy $\tau_E$ of Eve.
Then Eve wins her letter game by following $\tau_E$, since if Adam plays a word $w\in L(\A)$, then the resulting play of the letter game, consisting of a sequence $\pi=b_0,b_1\ldots$ of transition conditions and a word $w=w_0,w_1\ldots$, induces a play in $\game{w}{\A}$ that agrees with $\tau_E$. Since $\tau_E$ witnesses the history-determinism of $\A$, such a play must be winning, that is, $\pi$ restricted to $Q$ must satisfy $\A$'s acceptance condition.
 
Symmetrically, if the universality in $\A$ is history-deterministic, witnessed by a strategy $\tau_A$ of Adam, it induces a winning strategy for him in his letter-game.
 
For the converse, assume that Eve wins her letter game with a strategy $s$. We argue that this strategy also witnesses the history-determinism of the nondeterminism in $\A$, namely that Eve wins $\Sgame{w}{\A}{s}$ for every word $w\in L(\A)$.

Indeed, if a play $\pi\in\Sgame{w}{\A}{s}$ does not satisfy the acceptance condition of $\A$ while $w\in L(\A)$, then the play in Eve's letter game in which Adam plays $w$ and resolves universality as in $\pi$ would both agree with $s$ and be winning for Adam, contradicting that $s$ is winning for Eve. The nondeterminism of $\A$ is therefore history-deterministic.
 
Symmetrically, if Adam wins his letter game with strategy $\tau_A$, the universality in $\A$ is history-deterministic, witnessed by $\tau_A$. Hence, if $\A$ is compliant with the letter games, it is also history-deterministic.
\end{proof}
\begin{corollary}\label{cor:FiniteMemory}
If $\A$ is history-deterministic, then there are finite-memory strategies $\tau_E$ and $\tau_A$ to witness it.
\end{corollary}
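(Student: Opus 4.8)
The plan is to reduce the statement to the Büchi--Landweber theorem~\cite{BL69} via Lemma~\ref{lem:letter-hd}. That lemma tells us that a witness for history-determinism is exactly a winning strategy in one of the letter games, and—as noted just before the lemma—a letter-game strategy has precisely the type $(\pro{\Sigma}{\BP(Q)})^* \to \BP(Q)$ required of a history-determinism witness. So it suffices to show that each letter game is a finite-arena $\omega$-regular game, whence its winner, if any, has a finite-memory winning strategy realizable by a transducer; this transducer is then the desired finite-memory $\tau_E$ or $\tau_A$.

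First I would check that Eve's letter game (and symmetrically Adam's) is played on a \emph{finite} arena. A configuration is a pair $(\sigma,b)$ with $\sigma\in\Sigma\cup\{\epsilon\}$ and $b\in\BP(Q)$; although $\BP(Q)$ is infinite, only finitely many transition conditions are reachable from the initial configuration $(\epsilon,\iota)$. Indeed, reading a letter $a$ from a state-configuration $q$ produces $\delta(q,a)$, while decomposing a conjunction or disjunction moves to one of its two immediate subformulas. Hence every reachable $b$ belongs to the finite set $\{\iota\}\cup Q\cup\bigcup_{q\in Q,\,a\in\Sigma}\mathrm{Sub}(\delta(q,a))$ of subformulas of the (finitely many) transition conditions, so the arena has finitely many configurations. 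Moreover, since decomposition strictly descends into proper subformulas, a state-configuration is reached after finitely many steps, so each play reads infinitely many letters and the induced word $w$ and state-sequence $\rho$ are genuinely infinite.

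Next I would argue that the winning condition is $\omega$-regular. A play is an infinite sequence over the finite configuration set, and the induced pair $(w,\rho)$—with $w$ the concatenation of the $\sigma_i$ and $\rho$ the subsequence of visited states—is obtained from the play by a fixed finite-state transduction. Eve wins if $w\notin L(\A)$ or $\rho$ satisfies $\A$'s acceptance condition; since $\A$ recognizes the $\omega$-regular language $L(\A)$ and its acceptance condition is $\omega$-regular, this Boolean combination is an $\omega$-regular condition on infinite sequences of configurations. Thus Eve's letter game is a finite-arena $\omega$-regular game, and by~\cite{BL69} it is determined with finite-memory winning strategies; the same holds for Adam's game.

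Finally, assuming $\A$ is history-deterministic, Lemma~\ref{lem:letter-hd} gives that $\A$ is compliant with the letter games, so Eve wins her game and Adam wins his, and by the preceding paragraph each has a finite-memory winning strategy realizable by a transducer over the configuration arena. Since a play is exactly the sequence of configurations $(\sigma_i,b_i)$, such a transducer is precisely a finite-memory function of type $(\pro{\Sigma}{\BP(Q)})^*\to\BP(Q)$, which by Lemma~\ref{lem:letter-hd} witnesses history-determinism. The only step requiring care is the finiteness of the arena—that is, verifying that only finitely many transition conditions occur during play; everything else is a direct appeal to~\cite{BL69} and to the type-matching already recorded before Lemma~\ref{lem:letter-hd}.
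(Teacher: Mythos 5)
Your proposal is correct and follows exactly the paper's argument: appeal to Lemma~\ref{lem:letter-hd} to identify history-determinism witnesses with letter-game strategies, observe the letter games are finite $\omega$-regular games, and invoke~\cite{BL69} for finite-memory winning strategies. The paper compresses this into one sentence; your version merely makes explicit the supporting details (finiteness of the reachable configurations via subformula-closure, $\omega$-regularity of the winning condition, and the type-matching of the resulting transducer) that the paper leaves implicit.
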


\begin{proof}
Since the letter game is a finite $\omega$-regular game, its winner has a finite-memory strategy.
\end{proof}

The following two propositions state that ``standard manipulations'' of alternating automata preserve history-determinism. The \emph{dual} of an automaton $\A$, denoted by $\Dual{\A}$, is derived from $\A$ by changing every conjunction of a transition condition to a disjunction, and vice versa, and changing the acceptance condition to reject every sequence of states (labeling) that $\A$ accepts, and accept every sequence that $\A$ rejects.

\begin{proposition}\label{prop:DualPreservesGFG}
Consider an alternating automaton $\A$ and its dual $\Dual{\A}$. 
The nondeterminism (resp.\ universality) of $\A$ is history-deterministic if and only if the universality (resp.\ nondeterminism) of $\Dual{\A}$ is history-deterministic.
\end{proposition}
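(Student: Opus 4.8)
The plan is to exploit the standard duality of alternating automata directly at the level of the model-checking games that define history-determinism. Write $d:\BP(Q)\to\BP(Q)$ for the involution that swaps every $\wedge$ with $\vee$ and fixes states, so that $\Dual{\A}$ has the same state set $Q$ and transition function $\delta^{\Dual{\A}}(q,a)=d(\delta(q,a))$, together with the complemented acceptance condition. Since $d$ is an involution of $\BP(Q)$, the Eve- and Adam-strategies witnessing history-determinism for $\A$ and for $\Dual{\A}$ live in exactly the same function space $(\pro{\Sigma}{\BP(Q)})^*\to\BP(Q)$, which is what makes a clean correspondence possible.

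First I would fix a word $w$ and compare the two model-checking games $\game{w}{\A}=\pro{R_w}{\A}$ and $\game{w}{\Dual{\A}}=\pro{R_w}{\Dual{\A}}$. Applying $d$ to the transition-condition component of each position yields a bijection between their positions and hence between their plays. By Definition~\ref{def:GameAutomataProd} the owner of a node is determined solely by whether its transition condition is a disjunction (Eve) or a conjunction (Adam); since $d$ turns disjunctions into conjunctions and vice versa, this bijection swaps the owners at every choice node while mapping state-positions to state-positions. (The positions of $R_w$ carry a single successor, so their nominal ownership is irrelevant, as already noted.) Because the acceptance condition of $\Dual{\A}$ accepts exactly the state-sequences that $\A$ rejects, and $d$ fixes states, a single play $\pi$ is winning for Eve in $\game{w}{\A}$ if and only if its image $d(\pi)$ is winning for Adam in $\game{w}{\Dual{\A}}$. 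Thus $\game{w}{\Dual{\A}}$ is literally the dual game of $\game{w}{\A}$, and by determinacy of $\omega$-regular games Eve wins $\game{w}{\A}$ iff Adam wins $\game{w}{\Dual{\A}}$; in particular $L(\Dual{\A})=\Sigma^\omega\setminus L(\A)$.

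Next I would transport strategies across this bijection. Conjugating by $d$ on the history components and on the output turns a strategy $\tau_E$ for Eve in the $\A$-games into a strategy $\tau_A$ for Adam in the $\Dual{\A}$-games, and this map is itself a bijection because $d$ is an involution. Crucially, Eve's moves resolve the disjunctions (the nondeterminism) of $\A$, which under $d$ are exactly the conjunctions (the universality) of $\Dual{\A}$, so $\tau_A$ genuinely resolves universality. A play $d(\pi)$ of $\game{w}{\Dual{\A}}$ agrees with $\tau_A$ iff $\pi$ agrees with $\tau_E$, so the winner-flipping of the previous step gives, for each fixed $w$: $\tau_E$ is winning for Eve in $\game{w}{\A}$ iff $\tau_A$ is winning for Adam in $\game{w}{\Dual{\A}}$.

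Finally I would assemble the quantifiers. The nondeterminism of $\A$ is history-deterministic iff some $\tau_E$ wins $\game{w}{\A}$ for every $w\in L(\A)$; by the correspondence this holds iff the associated $\tau_A$ wins $\game{w}{\Dual{\A}}$ for every $w\in L(\A)$, and since $L(\A)=\Sigma^\omega\setminus L(\Dual{\A})$ this is precisely the statement that the universality of $\Dual{\A}$ is history-deterministic. The ``resp.'' direction follows by applying the same argument to $\Dual{\A}$ and using $\Dual{\Dual{\A}}=\A$. I expect the only delicate point to be the bookkeeping of the play/strategy correspondence, namely checking that $d$ commutes with the step relation of the product so that histories really map to histories; once that is in place, the winner-flipping and the language complementation are immediate consequences of determinacy.
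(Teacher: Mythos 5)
Your proof is correct and takes essentially the same approach as the paper, whose much terser argument simply observes that for every word $w$ the games $\game{w}{\A}$ and $\game{w}{\Dual{\A}}$ are the same game with the roles of Eve and Adam exchanged, so the history-deterministic strategies transfer between the two automata. Your write-up merely makes explicit the bookkeeping the paper leaves implicit: the involution $d$ commuting with the product's step relation, the winner flip coming from the complemented acceptance condition, and the complementation $L(\Dual{\A})=\Sigma^\omega\setminus L(\A)$ needed to align the quantification over words.
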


\begin{proof}
For every word $w$, the model-checking games $\game{w}{\A}$ and $\game{w}{\Dual{\A}}$ are the same, just switching roles between Adam and Eve. Thus, the history-deterministic strategy for Adam in $\A$ can serve Eve in $\Dual{\A}$ and vice versa.
\end{proof}

\begin{proposition}\label{prop:TransitionConditionFormat}
Consider an alternating automaton $\A$, and let $\A'$ be an automaton that is derived from $\A$ by changing some transition conditions to different, but equivalent, boolean formulas. Then the nondeterminism/universality in $\A=(\Sigma,Q,\iota,\delta,\alpha)$ is history-deterministic if and only if it is history-deterministic in $\A'$.
\end{proposition}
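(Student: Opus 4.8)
The plan is to prove the nondeterminism statement by transferring a winning letter-game strategy from $\A$ to $\A'$ (and back), and to deduce the universality statement from it by duality. By Lemma~\ref{lem:letter-hd} it suffices to reason about the letter games, and the only difference between the letter games of $\A$ and $\A'$ lies in how a transition condition $\delta(q,a)$, resp.\ $\delta'(q,a)$, is resolved into a successor state once Adam has fixed a letter $a$.

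The conceptual crux is a \emph{resolution lemma} for positive boolean formulas. Consider the finite game of resolving a formula $b\in\BP(Q)$, in which Eve chooses at each disjunction, Adam at each conjunction, and play ends at a state, the \emph{outcome}. A routine induction on $b$ shows that, for any target set $T\subseteq Q$, Eve can force the outcome into $T$ if and only if the valuation setting exactly the states in $T$ to true satisfies $b$ (the disjunction and conjunction cases match $\vee$ and $\wedge$ exactly; the leaf case is immediate). This characterisation is purely semantic, hence identical for the equivalent formulas $\delta(q,a)$ and $\delta'(q,a)$; in particular it immediately yields $L(\A)=L(\A')$, and for any Eve strategy $s$ resolving $b$ the set $\mathrm{Out}(s)\subseteq Q$ of outcomes reachable against all Adam replies satisfies $\mathrm{Out}(s)\models b$, hence $\mathrm{Out}(s)\models b'$ whenever $b\equiv b'$.

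Given a winning Eve strategy $\tau_E$ for $\A$'s letter game (finite-memory by Corollary~\ref{cor:FiniteMemory}), I would define $\tau_E'$ for $\A'$'s letter game by running a deterministic \emph{shadow} copy of $\A$'s game alongside the real play, maintaining the invariant that both plays have read the same word prefix and sit at the same state $\rho_i$ at every state configuration. When Adam picks a letter $a$, the real condition is $b'=\delta'(\rho_i,a)$ and the shadow condition is $b=\delta(\rho_i,a)\equiv b'$. Letting $S=\mathrm{Out}(\tau_E \text{ on } b)$ be the set of outcomes $\tau_E$ can produce on $b$, the resolution lemma gives $S\models b$, hence $S\models b'$, so Eve can force the resolution of $b'$ into $S$; she plays this forcing strategy, reaching some $\rho_{i+1}\in S$, and then advances the shadow play along the (canonically chosen) Adam resolution of $b$ under which $\tau_E$ outputs exactly $\rho_{i+1}$. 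The invariant is restored, and $\tau_E'$ is a well-defined function of the real history since the shadow play is reconstructed deterministically from it. The real and shadow runs then coincide and the shadow run agrees with $\tau_E$, so if $w\in L(\A')=L(\A)$ the run satisfies $\A$'s (equally $\A'$'s) acceptance condition; thus $\tau_E'$ is winning.

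The converse direction is the same argument with the roles of $\A$ and $\A'$ exchanged, using the symmetry of formula equivalence. For the universality statement I would avoid repeating the argument: since dualising preserves logical equivalence of positive formulas, $\Dual{\A'}$ is obtained from $\Dual{\A}$ by replacing transition conditions with equivalent ones, so the nondeterminism case applied to $\Dual{\A},\Dual{\A'}$ together with Proposition~\ref{prop:DualPreservesGFG} yields the universality case. The main obstacle is the bookkeeping of the third paragraph: turning the shadow simulation into a genuine, well-defined strategy on the real history rather than an after-the-fact correspondence, which is exactly where finite-memoryness and a fixed canonical choice of matching Adam shadow-resolution are used.
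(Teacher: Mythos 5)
Your proof is correct and takes essentially the same route as the paper's: your resolution lemma (the set of outcomes a local Eve strategy can force is a model of the positive formula, and every model can be forced) is precisely the semantic content of the paper's correspondence between local $b$-strategies and local $b'$-strategies, and your shadow simulation is exactly the paper's mechanism of translating the real history into an $\A$-history before consulting $\tau_E$. The only divergences are cosmetic: the paper first normalizes to DNF and handles Adam's strategy by a symmetric argument, whereas you work with arbitrary equivalent formulas directly and get the universality case by duality via Proposition~\ref{prop:DualPreservesGFG}; also, finite-memoryness of $\tau_E$ is not actually needed for well-definedness---your fixed canonical choice of matching shadow resolutions already suffices.
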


\begin{proof}
It is enough to show that changing any transition condition of an alternating automaton $\A$ to DNF does not influence its history-determinism for Eve/Adam.

Assume that the nondeterminism in $\A$ is history-deterministic, witnessed by a strategy $\tau$ of Eve. Let $\A'$ be an automaton that is derived from $\A$ by changing any transition condition $b$ of $A$, for a state $q$ and a letter $a$, into its DNF form $b'$. 
Let $k\in\Nat$ be the depth of alternation between nondeterminism and universality in $b$.

We show that Eve has a history-deterministic strategy $\tau'$ for $\A'$, by adapting $\tau$. We call \textit{local $b$-strategy} a way of resolving the nondeterminism in $b$. First observe that for every local $b$-strategy $s$, there is a corresponding local $b'$-strategy $s'$ that chooses the set of states that Adam can force in $k$ steps over $b$ if Eve follows $s$; conversely for every local $b'$-strategy $s'$, there is a corresponding local $b$-strategy $s$ such that the set of states that Adam can force in $k$ steps over $b$ if Eve follows $s$ is exactly Eve's choice in $s'$.

The strategy $\tau'$ can then be defined by replacing $b$-local strategies from $\tau$ with the corresponding $b'$-local strategies. More precisely, for every history $(u,h')\in(\pro{\Sigma^*}{\BP(Q))^*}$, we have that $\tau'(u,h'q)$ is the $b'$-local strategy corresponding to $\tau(u,hq)$, where $h$ is the sequence of transition conditions derived from $h'$, by replacing the $b'$ transitions consistent with a $b'$-local strategy $s'$ with the $b$ transitions consistent with the corresponding $b$-local strategy $s$.
Since the corresponding local strategies only differ in the paths taken within $b$ and $b'$, but not in the resulting states reached, $\tau'$ preserves Eve's victory.

The arguments for the other direction, that is assuming that the nondeterminism in $\A'$ is history-deterministic, and proving that this is also the case for $\A$, are analogous, and so are the arguments for how to adapt a history-deterministic strategy for Adam.
\end{proof}

The following lemma was shown in \cite{HP06} for nondeterministic automata and can be deduced for alternating automata from Lemma~\ref{lem:letter-hd} and the equivalence of history-determinism and being good for composition with automata \cite{Col13}. We provide a simple direct proof.

\begin{lemma}\label{lem:LetterImpliesGfg}
If an alternating automaton $\A$ is compliant with the letter games then it is good for game composition.
\end{lemma}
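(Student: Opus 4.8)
The plan is to show that if $\A$ is compliant with the letter games (Definition~\ref{def:GFLetterGame}), then for every game $G$ with $\Sigma$-labeled arena and winning condition $L(\A)$, Eve wins $G$ if and only if she wins $\pro{G}{\A}$. I would prove the two implications separately, and each should reduce to ``gluing together'' a strategy in $G$ with a letter-game strategy, playing the two in lockstep along a common word.

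Let me think carefully about what the two directions require and where the letter-game hypothesis gets used.

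The product $\pro{G}{\A}$ combines moves in $G$ (choosing successors of $R$-nodes) with moves resolving the transition conditions of $\A$. Eve's compliance gives her a strategy $\tau_E$ in Eve's letter game, and Adam's compliance gives him $\tau_A$ in Adam's letter game. These strategies resolve the nondeterminism/universality of $\A$ "on the fly" depending only on the word read. So in the product game, Eve should play her $G$-moves using a winning $G$-strategy and resolve the $\lor$-conditions of $\A$ using $\tau_E$; dually Adam should use $\tau_A$ for the $\land$-conditions.

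For the forward direction (Eve wins $G$ $\Rightarrow$ Eve wins $\pro{G}{\A}$): suppose Eve has a winning strategy $\sigma$ in $G$. In $\pro{G}{\A}$ I would have Eve play $\sigma$ on the $G$-component and use her history-deterministic/letter-game strategy $\tau_E$ (which exists by compliance, equivalently by Lemma~\ref{lem:letter-hd}) to resolve the disjunctions. Consider any play of $\pro{G}{\A}$ consistent with this combined strategy. Projecting to $G$ gives a play consistent with $\sigma$, hence winning for Eve, so its $\Sigma$-labeling is a word $w\in L(\A)$. Projecting to $\A$ gives a play of $\game{w}{\A}$ that agrees with $\tau_E$; since $\tau_E$ witnesses the history-determinism of the nondeterminism in $\A$ and $w\in L(\A)$, this play satisfies $\A$'s acceptance condition. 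By the definition of the product's labeling, this means the product-play is winning for Eve. Hence Eve wins $\pro{G}{\A}$.

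For the converse (Eve wins $\pro{G}{\A}$ $\Rightarrow$ Eve wins $G$), the cleanest route is contrapositive via determinacy: if Eve does not win $G$, then by determinacy of $\omega$-regular games Adam has a winning strategy $\sigma'$ in $G$, and I would dualize the previous argument, letting Adam play $\sigma'$ on the $G$-component and use $\tau_A$ (witnessing history-determinism of the universality in $\A$) to resolve the conjunctions in $\pro{G}{\A}$. Any product-play consistent with this yields a $G$-play won by Adam, so its word $w\notin L(\A)$; and the $\A$-projection agrees with $\tau_A$, so it fails $\A$'s acceptance condition, making the product-play winning for Adam. Thus Adam wins $\pro{G}{\A}$, so Eve does not.

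\textbf{The main obstacle} I anticipate is bookkeeping the two distinct history domains and composing the strategies correctly: a $G$-strategy reads histories in the arena $R$, a letter-game strategy reads histories in $(\pro{\Sigma}{\BP(Q)})^*$, and the product game interleaves $G$-moves with boolean-formula-resolution moves. I must check that a single consistent play of $\pro{G}{\A}$ projects \emph{simultaneously} to a $\sigma$-consistent $G$-play (sharing the same word $w$) and to a $\tau_E$-consistent play of $\game{w}{\A}$, so that the \emph{same} $w$ drives both the winning-condition check in $G$ and the membership check $w\in L(\A)$ used by the letter game. Making this synchronization precise—that the $\Sigma$-labeling produced by the $G$-moves is exactly the word the letter-game strategy sees—is the crux; everything else follows mechanically from the definitions and from determinacy.
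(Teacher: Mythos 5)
Your proposal is correct and follows essentially the same route as the paper's proof: whoever wins $G$ combines their winning $G$-strategy with their letter-game strategy (Eve's $\tau_E$ for disjunctions, Adam's $\tau_A$ for conjunctions) to win $\pro{G}{\A}$, with determinacy of $\omega$-regular games closing the argument. The synchronization issue you flag — that the word produced by the $G$-moves is exactly the word fed to the letter-game strategy — is handled the same way in the paper, by observing that any play of the combined strategy projects to a $\sigma$-consistent $G$-play and a $\tau_E$-consistent (resp.\ $\tau_A$-consistent) run over that same word.
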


\begin{proof}
Assume $\A$ is compliant with the letter games but that for some $\Sigma$ arena $G$, the game on $G$ with winning condition $L(\A)$ and the synchronized product $\pro{G}{\A}$ have different winners. If Eve wins in $G$, then she can combine her winning strategy $\tau$ in $G$ and her winning strategy $\tau'$ in her letter-game to win in the synchronized product $\pro{G}{\A}$: she resolves the choices in $G$ according to $\tau$, thus ensuring that the play in $\pro{G}{\A}$ follows a path of $G$ labeled with a word accepted by $\A$. Then, she can resolve the nondeterminism in $\A$ according to $\tau'$. Since $\tau'$ is winning in the letter game and all plays agreeing with $\tau$ follow a word in $G$ that is in $L(A)$, all plays agreeing with the combination of $\tau$ and $\tau'$ are accepting.

Similarly, if Adam wins in $G$, his strategy in $G$ and in his letter game combine into a winning strategy in the product $\pro{G}{\A}$.
\end{proof}

If $\A$ is good for infinite games, it is clearly good for finite games, which can be unfolded into infinite games. The following lemma shows that the other direction holds too: compositionality with finite games implies compliance with the letter games, and therefore, from the previous lemma, composition with infinite games.

Note that this correspondence does not extend to the notion of \textit{good for small games} \cite{BL19,CF19}: an automaton can be good for composition with games up to a bounded size, without being good for games.

\begin{lemma}\label{lem:GFfiniteImpliesLetter}
If an alternating automaton is good for composition with finite-arena one-player games then it is compliant with the letter games.
\end{lemma}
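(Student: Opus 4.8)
The plan is to prove the contrapositive: assuming $\A$ is not compliant with the letter games, I will build a finite one-player arena that witnesses a failure of compositionality. By Lemma~\ref{lem:letter-hd} the two letter games can be treated separately, and by the symmetry recorded in Proposition~\ref{prop:DualPreservesGFG} it suffices to treat Eve's game (the argument for Adam's game is the exact dual, replacing the all-Adam arena below by an all-Adam arena for $\Dual{\A}$, equivalently an all-Eve arena for $\A$). So suppose the nondeterminism in $\A$ is not compliant with Eve's letter game. Only finitely many transition conditions --- the subformulas of the $\delta(q,a)$ together with the states --- ever occur, so Eve's letter game is a finite $\omega$-regular game; by determinacy and the finite-memory determinacy of such games~\cite{BL69}, Adam has a finite-memory winning strategy $\sigma$ with memory set $M$.

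The key idea is to \emph{bake $\sigma$ into an arena}. I will define a finite all-Adam $\Sigma$-arena $G$, chosen so that in $\pro{G}{\A}$ Adam owns the state- and conjunction-positions and Eve owns the disjunction-positions, exactly mirroring the distribution of choices in Eve's letter game. The vertices of $G$ record $\sigma$'s memory at the moments just after a letter is read, i.e. at the formula-positions of the letter game. From such a vertex $u$ with memory $m$ I give one successor per $\sigma$-continuation: for every way the pending transition condition can be resolved (Adam's conjunctions following $\sigma$, Eve's disjunctions free) into a state $q$ with resulting memory $\hat m$, there is an edge to the vertex carrying the updated memory, labeled by the next letter $\sigma$ prescribes from $(q,\hat m)$. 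By construction every path of $G$ spells the word of a genuine play in which Adam follows $\sigma$; since $\sigma$ is winning, every such word lies in $L(\A)$. As $G$ is all-Adam, Eve's empty strategy is therefore winning in the word-game on $G$ with condition $L(\A)$, so Eve wins $G$.

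It remains to show that Adam wins $\pro{G}{\A}$, which I do by having him simulate $\sigma$. Along any play, Adam recomputes $\sigma$'s memory from the history, resolves the conjunction-positions as $\sigma$ does, and at each state-position $(u,q)$ navigates to the successor of $u$ labeled with the letter $\sigma$ prescribes from $q$ and the current memory. The substance of the construction is that this successor always exists: out of $u$, $G$ offers precisely the letters arising along $\sigma$-continuations, which include the one needed for whatever state $q$ the product resolved to and whatever disjunctions Eve chose. The resulting play of $\pro{G}{\A}$ projects onto a play of Eve's letter game in which Adam follows $\sigma$ and the product's disjunction choices play the role of Eve's; as $\sigma$ is winning, the word lies in $L(\A)$ while the run violates $\A$'s acceptance condition, so Adam wins. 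Thus Eve wins $G$ but not $\pro{G}{\A}$, contradicting good-for-game composition (Definition~\ref{def:GFGameComposition}).

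The main obstacle is this last matching. In $\pro{G}{\A}$ the resolution of a transition condition happens inside the product, with no move in $G$, and the letter $\sigma$ wants next depends on the state reached by that resolution, which is revealed only at runtime and is partly controlled by Eve. The construction must therefore make $G$ offer, from each formula-vertex, \emph{every} letter $\sigma$ could play next --- enough for Adam to follow $\sigma$ against any resolution --- while offering \emph{no} spurious letter that would let some $G$-path leave $L(\A)$ and spoil ``Eve wins $G$''. These two demands are reconciled exactly because $\sigma$ is winning, so its continuations both cover all of Adam's needs and remain inside $L(\A)$; finiteness of $M$ and of the set of occurring transition conditions keeps $G$ finite, as required for composition with finite-arena one-player games.
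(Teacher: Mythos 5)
Your proof is correct and follows essentially the same route as the paper: take the contrapositive, extract a finite-memory winning strategy for Adam in Eve's letter game, bake that transducer into a finite all-Adam $\Sigma$-labeled arena whose plays spell exactly the $\sigma$-consistent words (so Eve trivially wins the game but Adam wins the product by simulating $\sigma$), and handle Adam's letter game by the dual construction with an all-Eve arena. The only cosmetic difference is that the paper first normalizes transition conditions to CNF (via Proposition~\ref{prop:TransitionConditionFormat}) so that the transducer's states output a letter and a disjunctive clause, whereas you enumerate $\sigma$-consistent resolutions of arbitrary transition conditions directly.
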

\begin{proof}
Consider an alternating automaton $\A$ over the alphabet $\Sigma$. We show that if $\A$ is not compliant with the letter games then it is not good for finite-arena one-player games. 
By Proposition~\ref{prop:TransitionConditionFormat}, we may assume that the transition conditions in $\A$ are in CNF.

If $\A$ is not compliant with Eve's letter game, then since this game is $\omega$-regular, Adam has some finite-memory winning strategy, modeled by a transducer $\M$. 
Observe that states of $\M$ output the next moves of Adam, namely a letter and a disjunctive clause, and the transitions of $\M$ correspond to moves of Eve. 

We translate $\M$ into a one-player $\Sigma$-labeled game with winning condition $L(\A)$, in which all states belong to Adam: we consider every state/transition of $\M$ as a vertex/edge of $\G$, take the letter output of a state as the labeling of the corresponding vertex, and ignore the other outputs and the transition labels. (See an example in Figure~\ref{fig:TransducerToGame}.)
We claim that $\A$ is not good for one-player finite-arena games, since i) Adam loses $\G$; and ii) Adam wins $\pro{\G}{\A}$.

Indeed, considering claim (i), a play of $\G$ corresponds to a possible path in $\M$, which corresponds to a possible play in Eve's letter game that agrees with $\M$. 
If there is a play of $\G$ whose labeling is not in $L(\A)$, it follows that Eve can win her letter game against $\M$, by forcing a word not in $L(\A)$, which contradicts the assumption that $\M$ is a winning strategy.

As for claim (ii), Adam can play in the $\pro{\G}{\A}$ game according to $\M$: whenever in a vertex $(v,b)$ of $\pro{\G}{\A}$, where $v$ is a vertex of $\G$ and $b$ a transition condition of $\A$, Adam chooses the next vertex according to the transition in the corresponding state in $\M$. Thus, the generated play in $\pro{\G}{\A}$ corresponds to a play in Eve's letter game that agrees with $\M$, which Adam is guaranteed to win.

In the case that $\A$ is not compliant with Adam's letter game, we do the dual: Consider the transition conditions in $\A$ to be in DNF, have a winning strategy for Eve, modeled by a transducer $\M$ whose states output a letter and a conjunctive clause and whose transitions correspond to Adam's choices, and translate it to a $\Sigma$-labeled one-player game $\G$, in which all vertices belong to Eve. Then, for analogous reasons, Eve loses $\G$, but wins $\pro{\G}{\A}$.
\end{proof}

\begin{figure}
\input{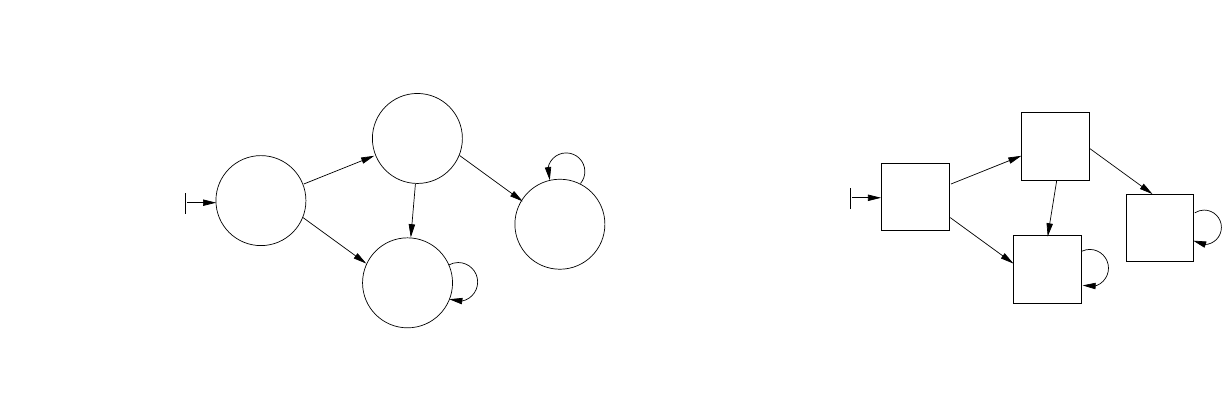_t} \caption{An example of a strategy for Adam in Eve's letter game, and the corresponding game, as used in the proof of Lemma~\ref{lem:GFfiniteImpliesLetter}.}\label{fig:TransducerToGame}
\end{figure}

The equivalence between the `good for trees' notion and being good for composition with one-player games, follows directly from the generalized definition of `good for trees' (Definition~\ref{def:GFTrees}) and the following observation: Every one-player $\Sigma$-labeled game is built on top of a $\Sigma$-labeled tree (its arena, in case it is infinite, or the expansion of all possible plays, in case of a finite arena), and every $\Sigma$-labeled tree can be viewed as a one-player game by assigning ownership of all positions to either Adam or Eve.
Clearly, every $\Sigma$-labeled tree $t$ belongs to $\UniDerived{L(A)}$ if and only if Eve wins the game on $t$ in which all nodes belong to Adam.

\begin{lemma}\label{lem:GoodForTrees}
An alternating automaton $\A$ is good for trees if and only if it is good for composition with one-player games.
\end{lemma}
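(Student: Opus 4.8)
The plan is to reduce the statement to the two correspondences isolated just before the lemma, treating the universal (Adam) side and the existential (Eve) side separately. A one-player game is one in which every position is owned by a single player, so there are exactly two flavours---all-Adam and all-Eve---and these match precisely the universal and existential expansions of $\A$.

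First I would fix the dictionary between one-player games and $\Sigma$-labeled trees. To a one-player game $G$ with winning condition $L(\A)$ I associate its tree $t_G$: when $G$'s arena is infinite, $G$ is itself a tree and $t_G = G$; when it is finite, $t_G$ is the unfolding of all plays of $G$. Conversely every $\Sigma$-labeled tree $t$ is an infinite-arena one-player game once all its nodes are assigned to a fixed player. I then record the two facts already noted in the text: if all nodes belong to Adam then Eve wins $G$ iff $t_G\in\UniDerived{L(\A)}$, since Adam's strategies range exactly over the branches and a play is won by Eve precisely when its branch lies in $L(\A)$; dually, if all nodes belong to Eve then Eve wins $G$ iff $t_G\in\ExiDerived{L(\A)}$.

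Next I would connect the product game to the expansions. Because the synchronized product is defined locally, unfolding commutes with the product, so $\pro{G}{\A}$ and $\pro{t_G}{\A}$ have the same plays up to relabeling, and hence the same winner. Reading $\pro{t_G}{\A}$ with the tree treated as all-Adam is by definition the membership game of the universal expansion of $\A$ on $t_G$, so Eve wins $\pro{G}{\A}$ iff the universal expansion accepts $t_G$; symmetrically, with the tree all-Eve, Eve wins $\pro{G}{\A}$ iff the existential expansion accepts $t_G$. Putting the pieces together, being good for composition with all-Adam one-player games asserts exactly that for every such $G$ we have $t_G\in\UniDerived{L(\A)}$ iff the universal expansion accepts $t_G$; as every tree arises as some $t_G$, this is precisely the statement that the universal expansion recognizes $\UniDerived{L(\A)}$. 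The all-Eve case yields the existential expansion recognizing $\ExiDerived{L(\A)}$ in the same manner. Since every step is a biconditional, conjoining the two flavours on each side gives the claimed equivalence in both directions at once.

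The only delicate point, and the step I would be most careful about, is the passage from $\pro{G}{\A}$ to $\pro{t_G}{\A}$ for finite-arena games: I must check that unfolding the finite product game to a tree preserves its winner. This holds because the winning condition of the product is the $\omega$-regular (parity) condition on its $\Gamma$-labels, which depends only on the sequence of labels along a play; the unfolding has exactly the same plays as $\pro{G}{\A}$, so a strategy for either player transfers between the two without changing any outcome. Everything else is immediate from the definitions of the expansions, of the synchronized product, and of the derived tree languages $\UniDerived{L(\A)}$ and $\ExiDerived{L(\A)}$.
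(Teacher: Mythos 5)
Your proposal is correct and follows essentially the same route as the paper, whose proof consists precisely of the observation that one-player games and $\Sigma$-labeled trees are interchangeable (via unfolding for finite arenas and via ownership assignment for trees), so that the universal/existential expansions match all-Adam/all-Eve games. Your additional check that unfolding a finite product game preserves its winner is a detail the paper leaves implicit, but it does not change the argument.
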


A finite-arena game can be viewed as an alternating automaton over a singleton alphabet, suggesting that being good for composition with alternating automata implies being good for composition with finite games. This is indeed the case and by Lemmas \ref{lem:LetterImpliesGfg} and \ref{lem:GFfiniteImpliesLetter}, it also implies being good for infinite games. It turns out that even though alternating automata over a non-singleton alphabet cannot be just viewed as games, the other direction also holds.

\begin{lemma}\label{lem:AutomataComposition}
An alternating automaton $\A$ is good for game composition if and only if it is good for automata composition.
\end{lemma}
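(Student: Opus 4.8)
The plan is to prove the two implications separately, freely using the equivalences already established in this section: by Lemmas~\ref{lem:GFfiniteImpliesLetter}, \ref{lem:LetterImpliesGfg} and \ref{lem:letter-hd}, being good for game composition coincides with being history-deterministic (none of these lemmas depends on the present one, so there is no circularity), and I will use whichever formulation is convenient. For the direction from automata to games I would realise games as singleton-alphabet automata, and for the direction from games to automata I would combine a winning strategy in the acceptance game of $\B$ with a history-deterministic strategy for $\A$.

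For \textbf{good for automata composition $\Rightarrow$ good for game composition}, I would first reduce to finite one-player games. Given such a game $\G$ with a finite $\Sigma$-labeled arena and winning condition $L(\A)$, view $\G$ as an alternating automaton $\B$ over a singleton alphabet $\{\ast\}$: its states are the nodes of $\G$, its transition conditions encode the moves (a disjunction at Eve's nodes, a conjunction at Adam's), its state-labeling is the arena labeling $\GL$, and its acceptance condition is $L(\A)$. The unique input $\ast^\omega$ is accepted by $\B$ exactly when Eve wins $\G$, and the acceptance game $\game{\ast^\omega}{\pro{\B}{\A}}$ is the synchronized product $\pro{\G}{\A}$ (the only discrepancy between Definitions~\ref{def:prod} and~\ref{def:GameAutomataProd} is whether the two transition conditions are unfolded at once or one after the other, which does not change the winner). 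Hence $L(\B)=L(\pro{\B}{\A})$, which good-for-automata-composition supplies, says precisely that Eve wins $\G$ iff she wins $\pro{\G}{\A}$. Taking $\B$ nondeterministic (all nodes of $\G$ Eve's) or universal (all Adam's) covers both kinds of one-player games, so $\A$ is good for finite one-player games; Lemma~\ref{lem:GFfiniteImpliesLetter} followed by Lemma~\ref{lem:LetterImpliesGfg} then upgrades this to composition with arbitrary games.

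For the converse, \textbf{good for game composition $\Rightarrow$ good for automata composition}, I would first pass to history-determinism and fix witnessing strategies $\tau_E$ (nondeterminism) and $\tau_A$ (universality) for $\A$. Let $\B$ be any alternating (word or tree) automaton with $\Sigma$-labeled states and acceptance condition $L(\A)$; I must show $L(\pro{\B}{\A})=L(\B)$. For $L(\B)\subseteq L(\pro{\B}{\A})$, fix $u\in L(\B)$ and a winning strategy $\sigma$ for Eve in $\game{u}{\B}$. In $\game{u}{\pro{\B}{\A}}$ Eve plays the combined strategy that resolves $\B$'s disjunctions by $\sigma$ and $\A$'s disjunctions by $\tau_E$, feeding $\tau_E$ the $\Sigma$-word $\beta(q_0)\beta(q_1)\cdots$ spelled out by the $\B$-states visited along the current play. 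For any consistent play, its projection to the $\B$-component is a play of $\game{u}{\B}$ consistent with $\sigma$, hence winning, so the emitted word $w$ lies in $L(\A)$; its projection to the $\A$-component is then a play of $\game{w}{\A}$ consistent with $\tau_E$ over a word of $L(\A)$, hence accepting. Since the acceptance of $\pro{\B}{\A}$ depends only on the $\A$-component, Eve wins, giving $u\in L(\pro{\B}{\A})$. The reverse inclusion is dual: from a winning strategy of Adam in $\game{u}{\B}$ (guaranteeing the emitted word is \emph{outside} $L(\A)$) together with $\tau_A$, Adam wins $\game{u}{\pro{\B}{\A}}$.

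I expect the converse to be the crux, and the essential point is the \emph{uniformity} of history-determinism: in the composed game the word $w$ that $\A$ must read is revealed only gradually and depends on Adam's resolution of $\B$'s universality, so a merely pointwise guarantee (``for each word in $L(\A)$ some resolution accepts'') would not suffice; $\tau_E$ works precisely because it depends only on the prefix read so far. This is also what makes the argument go through verbatim when $\B$ is a tree automaton, since each play of the acceptance game still follows a single branch along which $\tau_E$ and $\tau_A$ certify membership or non-membership in $L(\A)$. The remaining work is bookkeeping: checking that the projections of a composed play are legitimate plays of $\game{u}{\B}$ and $\game{w}{\A}$, and that the constructions of Definitions~\ref{def:prod} and~\ref{def:GameAutomataProd} determine the same winner.
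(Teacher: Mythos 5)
Your proof is correct, and its two halves relate to the paper's differently. The first direction essentially coincides with the paper's argument: the paper likewise encodes a finite game as an alternating automaton over a singleton alphabet, identifies the game product $\pro{R}{\A}$ with the automata product of that automaton with $\A$, and then invokes Lemmas~\ref{lem:GFfiniteImpliesLetter} and~\ref{lem:LetterImpliesGfg} to pass from finite to arbitrary games (the paper does this for two-player finite games; your restriction to one-player games is harmless, since Lemma~\ref{lem:GFfiniteImpliesLetter} asks for no more). For the converse, however, your route is genuinely different: the paper never passes through history-determinism. Given $w\in L(\B)$, it applies the game-composition hypothesis directly to the model-checking game $\game{w}{\B}$, viewed as an (infinite-arena, two-player) game with winning condition $L(\A)$, obtains a winning strategy $s$ for Eve in $\pro{\game{w}{\B}}{\A}$, and transfers $s$ move-by-move to $\game{w}{\pro{\B}{\A}}$, these two games differing only in the order in which the two transition conditions are unfolded; Adam's case is symmetric. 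You instead upgrade the hypothesis to history-determinism via Lemmas~\ref{lem:GFfiniteImpliesLetter}, \ref{lem:LetterImpliesGfg} and~\ref{lem:letter-hd} (your non-circularity remark is accurate) and build a combined strategy from $\sigma$ and $\tau_E$ (resp.\ $\tau_A$), which is essentially Colcombet's original argument. The trade-off: the paper's transfer argument fixes no uniform strategy and rests only on the near-isomorphism of the two product games, whereas yours isolates the conceptual reason the implication holds (uniformity of $\tau_E$ in a word that is revealed only gradually and depends on Adam's play in $\B$) and extends verbatim to tree automata, a case the paper handles in a separate remark. One bookkeeping slip worth noting: by Definition~\ref{def:prod} the word that $\A$ reads along a play is $\beta(q_1)\beta(q_2)\cdots$, the labels of the \emph{successor} $\B$-states, not $\beta(q_0)\beta(q_1)\cdots$; this off-by-one does not affect your argument.
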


\begin{proof}
We start with showing that being good for automata composition implies being good for game composition.
Given a game over a finite $\Sigma$-arena $R=(V,X,V_E,L)$ with initial position $\iota$ and winning condition $W\subseteq \Sigma^{\omega}$, consider the automaton $A_R=(\{a\},V,\iota, \delta,\alpha)$ over the alphabet $\{a\}$ with acceptance condition $W$, where $\delta(v,a)=\bigvee \{v' | (v,v')\in X\}$ if $v\in V_E$ and $\delta(v,a)=\bigwedge \{v' | (v,v')\in X\}$ otherwise. $A_R$ accepts the unique word in $\{a\}^\omega$ if and only if Eve has a winning strategy in $R$ from $\iota$, because a strategy in $R$ is exactly a run of $A_R$ over this unique word, and it is winning if and only if the run is accepting. 

Then, observing that the synchronized product $\pro{R}{\A}$ between a finite game and an automaton is the special case of the synchronized product $\pro{A_R}{\A}$, we conclude that if an automaton is good for automata composition, then Eve wins $\pro{R}{\A}$ if and only if $\pro{A_R}{\A}$ is non-empty, if and only if $A_R$ is non-empty, i.e. if and only if Eve has a winning strategy in $R$. That is, $\A$ is good for finite game composition. From Lemmas \ref{lem:LetterImpliesGfg} and \ref{lem:GFfiniteImpliesLetter}, $\A$ is also good for composition with infinite games.\\

For the other direction, assume that $\A$ is good for game composition.
We show that $\A$ is also good for automata composition.  Consider an alternating automaton $\B$ with acceptance condition $L(\A)$. Let $w\in L(\B)$ and consider the model-checking game $\game{w}{\B}$ in which Eve has a winning strategy. Since $\A$ is good for game composition, Eve also has a winning strategy $s$ in $\pro{\game{w}{\B}}{\A}$. We use this strategy to build a strategy $s'$ for Eve in $\game{w}{\pro{\B}{\A}}$. 
 
 First recall from Def. \ref{def:prod}, that the transitions of $\pro{\B}{\A}$ are of the form 
$f(c,q)\vee f(c',q)$, $f(c,q)\wedge f(c',q)$, corresponding to choices in $\B$, 
or of the form $g(q,c)\vee g(q,c')$ or $g(q,c)\wedge g(q,c')$, corresponding to choices in $\A$.
At a position $(w,f(c,q)\vee f(c',q))$, Eve plays as $s$ plays at $((w,c\vee c'),q)$; at $(w,g(q,c)\vee g(q,c'))$ Eve plays as $s$ plays at $((w,c\vee c'),q)$. 
Since the winning condition in both games is determined by the states of $\A$ visited infinitely often, if $s$ is winning, so is $s'$. Therefore $L(\pro{\B}{\A})$ accepts $w$ and $L(\B)\subseteq L(\pro{\B}{\A})$. In the case $w \notin L(\B)$, Adam can similarly copy his strategy from $\pro{\game{w}{\B}}{\A}$ into $\game{w}{\pro{\B}{\A}}$.

We conclude that the language of $\pro{\B}{\A}$ is equal to the language of $\B$ and  therefore $\A$ is good for automata composition.
\end{proof}

\begin{remark}
We observe that compositionality with word automata also implies compositionality with (symmetric, unranked) tree automata. A tree automaton is similar to a word automaton, except that its transitions have modalities $\Box q$ and $\Diamond q$ instead of states. Then, the model-checking (or membership) game of a tree and an automaton is a game, as for words, where, in addition, the modalities $\Box q$ and $\Diamond q$ dictate whether the choice of successor in the tree is given to Adam or Eve. Then, if $\A$ composes with games, it must in particular compose with the model-checking game of $t$ and a tree automaton $\B$ with acceptance condition $L(\A)$. If Eve (Adam) wins the model-checking game $\game{t}{\B}$, she (he) also wins $\game{t}{\B}\times \A$. Her (his) winning strategy in this game is also a winning strategy in $\game{t}{\pro{\B}{\A}}$, so $\pro{\B}{\A}$ must accept (reject) $t$. $\A$ therefore composes with tree-automata.
\end{remark}

While Theorem \ref{thm:equivalence} obviously holds also for nondeterministic automata, we observe that in the absence of universality, the definitions of Section~\ref{sec:GfgDefinitions} can be relaxed into asymmetrical ones. For letter games, history-determinism, and good-for-trees, it follows directly from the definitions, as only their `nondeterministic part' applies. For composition with games and automata, we also show that it suffices to compose with universal automata and games.

\begin{lemma}\label{lem:UniversalAutomaton}
A nondeterministic automaton $\A$ is good for automata composition if and only if it is good for composition with universal automata.
\end{lemma}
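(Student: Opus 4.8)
The plan is to prove the two directions separately, with all the work concentrated in the backwards implication. The forward direction is immediate: every universal automaton is in particular an alternating automaton, so if $\A$ is good for composition with all alternating automata it is \emph{a fortiori} good for composition with the universal ones. For the converse, I would argue the contrapositive and exploit the fact that $\A$ is nondeterministic. By Theorem~\ref{thm:equivalence} (specifically Lemma~\ref{lem:AutomataComposition} together with the chain back to history-determinism), if $\A$ is \emph{not} good for automata composition then $\A$ is not history-deterministic. Since $\A$ is nondeterministic, its universality is trivially history-deterministic, so the failure must lie in its nondeterminism; by Lemma~\ref{lem:letter-hd} this means $\A$ is not compliant with \emph{Eve's} letter game.

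The core step is then to reuse the construction from the proof of Lemma~\ref{lem:GFfiniteImpliesLetter}. Because $\A$ fails precisely Eve's letter game, Adam has a finite-memory winning strategy there, modeled by a transducer $\M$, and that proof turns $\M$ into a one-player $\Sigma$-labeled finite game $\G$ with winning condition $L(\A)$ in which \emph{all vertices belong to Adam}, satisfying (i) Eve wins $\G$ and (ii) Eve loses $\pro{\G}{\A}$. The key observation specific to this lemma is that, since every vertex of $\G$ is Adam's, the automaton $A_{\G}$ obtained from $\G$ exactly as in Lemma~\ref{lem:AutomataComposition} --- over the singleton alphabet $\{a\}$, with each vertex $v$ translated to the transition condition $\bigwedge\{v'\mid (v,v')\in X\}$ and acceptance condition $L(\A)$ --- has only conjunctive transition conditions, and is therefore a \emph{universal} automaton.

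It then remains to transport (i) and (ii) across this translation. As established in Lemma~\ref{lem:AutomataComposition}, $A_{\G}$ accepts the unique word of $\{a\}^{\omega}$ if and only if Eve wins $\G$, and $\pro{\G}{\A}$ is the special case of $\pro{A_{\G}}{\A}$ obtained by restricting to this unique word, so $\pro{A_{\G}}{\A}$ accepts if and only if Eve wins $\pro{\G}{\A}$. Combining with (i) and (ii), $A_{\G}$ accepts while $\pro{A_{\G}}{\A}$ rejects, whence $L(A_{\G})\neq L(\pro{A_{\G}}{\A})$. This exhibits a universal automaton with acceptance condition $L(\A)$ witnessing that $\A$ is not good for composition with universal automata, which completes the contrapositive.

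I expect the main obstacle to be making the bookkeeping in the last two paragraphs watertight rather than any deep difficulty: one must verify that it is genuinely Eve's letter game (and not Adam's) that can fail for a nondeterministic $\A$ --- so that the game handed over by Lemma~\ref{lem:GFfiniteImpliesLetter} is the all-Adam one and hence yields a universal, not merely alternating, automaton --- and that the winner/acceptance correspondence of Lemma~\ref{lem:AutomataComposition} is applied in the correct direction on both $\G$ and $\pro{\G}{\A}$. No new construction is needed; the proof is essentially an assembly of the singleton-alphabet encoding of Lemma~\ref{lem:AutomataComposition} with the letter-game failure witness of Lemma~\ref{lem:GFfiniteImpliesLetter}, specialised to the nondeterministic case where only conjunctive transition conditions arise.
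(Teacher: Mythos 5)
Your proof is correct, but it takes a genuinely different route from the paper's. The paper proves the nontrivial direction directly rather than by contrapositive: given an arbitrary game $G$ with winning condition $L(\A)$ that Eve wins, it fixes a finite-memory winning strategy $s$ for her and observes that the game induced by restricting $G$ to $s$ is a one-player, all-Adam game, hence a universal automaton over a singleton alphabet; composition with that universal automaton gives Eve a strategy which, combined with $s$, wins $\pro{G}{\A}$ (the case where Adam wins $G$ uses nondeterminism of $\A$ directly: once the play follows a word outside $L(\A)$, no resolution of the disjunctions can be accepting). This establishes good-for-game-composition, and the conclusion follows from Lemma~\ref{lem:AutomataComposition}. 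You instead argue the contrapositive by assembling three existing pieces: Theorem~\ref{thm:equivalence} to convert failure of automata composition into failure of Eve's letter game (using, correctly, that Adam's letter game is trivially won for nondeterministic automata, so the failure is localized in the nondeterminism), the transducer-to-game construction inside the proof of Lemma~\ref{lem:GFfiniteImpliesLetter} to produce an all-Adam counterexample game $\G$, and the singleton-alphabet encoding of Lemma~\ref{lem:AutomataComposition} to turn $\G$ into a universal automaton whose composition with $\A$ changes the language. Both arguments pivot on the same insight---all-Adam one-player games are exactly universal automata over a one-letter alphabet---but reach it from opposite ends: the paper manufactures the all-Adam game by restricting an arbitrary game with Eve's winning strategy, while you extract it from the letter-game failure. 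Your route needs no new construction and avoids appealing to finite-memory determinacy of the arbitrary target game (you only use determinacy of the letter game, already invoked in Lemma~\ref{lem:GFfiniteImpliesLetter}), at the price of invoking the full equivalence theorem up front; the paper's route is more self-contained and exhibits the composed winning strategy explicitly.
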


\begin{proof}
Since universal automata is a subclass of alternating automata, one direction is immediate and we only need to show that if $\A$ is good for composition with all universal automata, it is good for composition with all automata.

Assume that $\A$ is good for composition with universal automata. We will show that $\A$ composes with any game $G$ with acceptance condition $L(\A)$. Assume Eve wins in $G$. Let $G'$ be the game induced by a positional winning strategy $s$ for Eve in $G$, seen as a universal automaton on the singleton alphabet. Since $\A$ composes with universal automata, it composes with $G'$, and Eve has a winning strategy $s'$ in $\pro{G'}{\A}$. Then, Eve's strategy in $\pro{G}{\A}$ consisting of using $s$ to resolve the branching in $G$ and $s'$ to resolve the nondeterminism in $\A$ is winning.
If Adam wins in $G$, then his winning strategy in $\pro{G}{\A}$ resolves the branching in $G$ according to a winning strategy. This forces the play to follow a word not in $L(\A)$. Eve has no accepting run in $\A$ for such a word and therefore can not win in $\pro{G}{\A}$ against this strategy.
\end{proof}
\section{Expressiveness}\label{sec:Expressiveness}
For some acceptance conditions, such as weak, B\"uchi, and co-B\"uchi, alternating automata are more expressive than deterministic ones. For other conditions, such as parity, Rabin, Streett, and Muller, they are not. Yet, also for the latter conditions, once considering the condition's \emph{index}, which is roughly its size, alternating automata are more expressive than deterministic automata with the same acceptance condition and index. (More details on the different acceptance conditions can be found, for example, in \cite{Bok18}.)

Most acceptance conditions are preserved, together with their index, when taking the product of an automaton $\A$ with an auxiliary memory $M$. In such a product, the states of the resulting automaton are pairs $(q,m)$ of a state $q$ from $\A$ and a state $m$ from $M$, while the acceptance condition is defined according to the projection of the states to their $\A$'s component. In particular, the weak, B\"uchi, co-B\"uchi, parity, Rabin, and Streett conditions are preserved, together with their index, under memory product, while the very-weak and Muller conditions are not.

For showing that GFG automata are not more expressive than deterministic automata with the same acceptance condition and index, we generalize the proof of \cite{BKKS13} from nondeterminism to alternation. The idea is to translate an alternating GFG automaton $\A$ to an equivalent deterministic automaton $\D$ by taking the product of $\A$ with the transducers that model the history-deterministic strategies of $\A$.

\begin{theorem}
Every alternating GFG automaton with acceptance condition that is preserved under memory-product can be translated to a deterministic automaton with the same acceptance condition and index. In particular, this is the case for weak, co-B\"uchi, B\"uchi, parity, Rabin, and Streett GFG alternating automata of any index.
\end{theorem}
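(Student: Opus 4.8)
The plan is to convert the two finite-memory strategies that witness $\A$'s history-determinism into a single deterministic automaton, extending to alternation the product construction of \cite{BKKS13}. Since $\A$ is GFG, by Theorem~\ref{thm:equivalence} it is history-deterministic (Definition~\ref{def:HistoryDet}), so there is a strategy $\tau_E$ resolving the nondeterminism and a strategy $\tau_A$ resolving the universality, with the guarantees that $\tau_E$ wins $\game{w}{\A}$ for Eve whenever $w\in L(\A)$ and $\tau_A$ wins $\game{w}{\A}$ for Adam whenever $w\notin L(\A)$. By Corollary~\ref{cor:FiniteMemory} both can be taken finite-memory, hence modeled by transducers $\M_E$ and $\M_A$ with finite memory sets $M_E$ and $M_A$.

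First I would define the candidate automaton $\D$ as a memory product of $\A$ with the combined memory $M=M_E\times M_A$, so that its states are pairs $(q,m)$ and its acceptance condition is read off the $\A$-component $q$. Reading a letter $a$ from $(q,m)$, I compute $\delta(q,a)$ and resolve it to a single state by letting $\tau_E$ choose at every disjunction and $\tau_A$ choose at every conjunction; since every boolean choice is fixed, this unfolding yields a \emph{unique} successor state $q'$ together with a unique updated memory $m'$, so the transition function of $\D$ is deterministic. Because acceptance in $\D$ depends only on the $\A$-component, the hypothesis that the condition is preserved under memory product gives that $\D$ has the same acceptance condition and the same index as $\A$; the conditions listed in the statement (weak, co-B\"uchi, B\"uchi, parity, Rabin, Streett) are precisely those established to be so preserved.

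The correctness claim $L(\D)=L(\A)$ then follows from the two winning guarantees. The unique run of $\D$ on a word $w$ is, by construction, exactly the unique play of $\game{w}{\A}$ in which Eve follows $\tau_E$ and Adam follows $\tau_A$. If $w\in L(\A)$, then since $\tau_E$ is winning for Eve against \emph{every} Adam strategy, in particular against $\tau_A$, this play satisfies $\A$'s acceptance condition, and so $\D$ accepts $w$. Dually, if $w\notin L(\A)$, then $\tau_A$ is winning for Adam against every Eve strategy, in particular against $\tau_E$, so the play is rejecting and $\D$ rejects $w$.

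The main obstacle is bookkeeping rather than conceptual. The letter-game histories over which $\tau_E$ and $\tau_A$ are defined include the intermediate transition conditions produced while a single $\delta(q,a)$ is unfolded down to a state, whereas a memory product updates its memory once per input letter. I would resolve this by noting that unfolding $\delta(q,a)$ to a state is a bounded process, so the finite sequence of memory updates performed by $\M_E$ and $\M_A$ during that unfolding composes into a single per-letter update; if convenient, Proposition~\ref{prop:TransitionConditionFormat} lets me first normalize the transition conditions so that this unfolding has a uniform shape. The care needed is to verify that this composite update, together with $q$ and $a$, determines a unique next pair $(q',m')$, so that $\D$ is genuinely deterministic and genuinely a memory product of $\A$, and hence that the preservation hypothesis applies to it.
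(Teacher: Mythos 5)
Your proposal is correct and follows essentially the same route as the paper: both deduce finite-memory history-deterministic strategies via Corollary~\ref{cor:FiniteMemory}, form the product of $\A$ with the two transducers (normalizing transition conditions via Proposition~\ref{prop:TransitionConditionFormat} so the per-letter memory updates compose deterministically), read the acceptance condition off the $\A$-component using preservation under memory-product, and establish $L(\D)=L(\A)$ by observing that the unique run of $\D$ on $w$ is the play of $\game{w}{\A}$ agreeing with both strategies, which is accepting exactly when $w\in L(\A)$. The paper merely makes your ``bookkeeping'' step explicit by writing out the interleaved memory-update formulas for the two transducers.
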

\begin{proof}
Consider an alternating GFG automaton $\A=(\Sigma,Q,\iota,\delta,\alpha)$. For convenience, we may assume by Proposition~\ref{prop:TransitionConditionFormat} that $\A$'s transition conditions are in DNF.

By Corollary~\ref{cor:FiniteMemory}, the history-determinism of $\A$'s universality and nondeterminism is witnessed by finite-memory strategies, modeled by transducers $M_A=(I_A,O_A,M_A,\iota_A,\rho_A,\chi_A)$ and $M_E=(I_E,O_E,M_E,\iota_E,\rho_E,\chi_E)$, respectively. 
Observe that since transition conditions of $\A$ are in DNF, the strategy $\M_A$ chooses a state of $\A$ for every letter in $\Sigma$ and clause of states of $\A$, while the transitions in $\M_E$ are made in pairs, first choosing a clause for a letter in $\Sigma$ and then updating the memory again according to Adam's choice of a state of $\A$.

Formally, we have that the elements of $I_A$ are pairs $(a,C)$, where $a\in\Sigma$ and $C$ is a  clause of states in $Q$, and that elements of $O_A$ are states in $Q$, while elements of $I_E$ are in $\Sigma\cup Q$ and elements of $O_E$ are either clauses of states in $Q$ or $\epsilon$ (when only updating the memory).

Let $\D=(\Sigma,Q',\iota',\delta',\alpha')$ be the deterministic automaton that is the product of $\A$ and $\M_A$, in which the universality is resolved according to $\M_A$ and the nondeterminism according to $\M_E$.
That is, $Q'=Q\times M_A\times M_E$, $\iota'=(\iota,\iota_A,\iota_E)$, $\alpha'(q,x,y)=\alpha(q)$, and for every $q\in Q$, $x\in M_A$, $y\in M_E$, and $a\in\Sigma$, we have $\delta'((q,x,y),a) = (q',x',y')$, where $x' = \rho_A(x,(a, \chi_E(\rho_E(y,a))))$, $q' = \chi_A(x')$, and $y' = \rho_E(\rho_E(y,a),q')$.

Observe that $\A$ and $\D$ have the same acceptance condition and the same index, as $\A$'s acceptance condition is preserved under memory-product. We also have that $\A$ and $\D$ are equivalent, since for every word $w$, the games $\game{w}{\A}$, $\Sgame{w}{\A}{\M_A,\M_E}$, and $\game{w}{\D}$ have the same winner.
\end{proof}

\section{Succinctness}\label{sec:Succinctness}
Nondeterministic GFG automata over finite words and weak GFG automata over infinite words can be pruned to equivalent deterministic automata \cite{KSV06,Mor03}. We show that this remains true in the alternating setting.
The succinctness of nondeterministic GFG B\"uchi automata compared to deterministic ones is still an open question, having no lower bound and a quadratic upper bound, whereas nondeterministic GFG co-B\"uchi  automata can be exponentially more succinct than their deterministic counterparts \cite{KS15}. 
We show that in the alternating setting, both B\"uchi and co-B\"uchi GFG automata are singly-exponential more succinct than deterministic ones.
We leave open the question of whether stronger acceptance conditions can allow GFG automata to be doubly-exponential more succinct than deterministic ones.

In this section we focus on specific classes of automata, and for brevity use three letter acronyms in $\{$D, N, A$\} \times \{$W, B, C$\} \times \{$W$\}$ when referring to them. The first letter stands for the transition mode (deterministic, nondeterministic, alternating); the second for the acceptance-condition (weak, B\"uchi, co-B\"uchi); and the third indicates that the automaton runs on words. For example, DBW stands for a deterministic B\"uchi automaton on words. We also use DFA, NFA, and WFA when referring to automata over finite words.

In the nondeterministic setting, the proof that GFG NFAs and GFG NWWs are not more succinct than DFAs and DWWs, respectively, is based on two properties: i) In a minimal DFA or DWW for a language $L$, there is exactly one state for every Myhill-Nerode equivalence class of $L$. (Recall that finite words $u$ and $v$ are in the same class $C$ when for every word $w$, $uw\in L$ if and only if $vw\in L$. For a class $C$, the language $L(C)$ of $C$ is $\{w \St \exists u\in C \mbox{ such that } uw\in L\}$.); and ii) In a nondeterministic GFG automaton $\A$ that has no redundant transitions (removing a transition will change its language or make it not GFG), for every finite word $u$ and states $q,q'\in\delta(u)$, we have $L(\A^q) = L(\A^{q'})$.

For showing that GFG AFAs and AWWs are not more succinct than DFAs and DWWs, respectively, we provide in the following lemma a variant of the above second property.

\begin{lemma}\label{lem:MyhillNerode}
Consider a GFG alternating automaton $\A$. Then for every class $C$ of the Myhill-Nerode equivalence classes of $L(\A)$, there is a state $q$ in $\A$, such that $L(\A^q) = L(C)$.
\end{lemma}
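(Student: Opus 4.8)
The goal is to produce, for each Myhill-Nerode class $C$ of $L = L(\A)$, a single state $q$ of $\A$ with $L(\A^q) = L(C)$. My plan is to fix a representative word $u\in C$, run the two history-deterministic strategies $\tau_E$ (for nondeterminism) and $\tau_A$ (for universality) along $u$, and read off from the resulting configuration a single ``live'' state of $\A$ whose sublanguage equals $L(C)$. By Corollary~\ref{cor:FiniteMemory} these strategies may be taken finite-memory, and by Proposition~\ref{prop:TransitionConditionFormat} I may assume the transition conditions are in a convenient normal form (say DNF), so that after reading $u$ and letting both players follow their strategies, the relevant transition condition collapses to a definite set of states.

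First I would make precise the notion of a state being reached after reading $u$ under $(\tau_E,\tau_A)$. Playing the model-checking game of any continuation $uw$ while both players follow their history-deterministic strategies on the prefix $u$ drives the play, after the $|u|$ letters of $u$ are consumed, into a position whose transition condition has been resolved by $\tau_E$ and $\tau_A$ down to an individual state $q$ of $Q$ (this is exactly the ``choosing the next state'' layer in the product). I would argue that because the strategies depend only on the word read so far, this state $q$ is the same regardless of the continuation $w$; this uniformity is precisely the content of history-determinism and is where the earlier lemmas do the work.

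Next I would prove $L(\A^q) = L(C)$ by a two-sided argument. For the inclusion $L(C)\subseteq L(\A^q)$: take $w$ with $uw\in L$; since $\tau_E$ is winning on all words of $L$ and the play on $uw$ passes through $q$ after the prefix $u$, the residual play from $q$ on $w$ is winning, so $w\in L(\A^q)$. For the reverse inclusion $L(\A^q)\subseteq L(C)$: take $w\notin L(C)$, i.e.\ $uw\notin L$; now $\tau_A$, the winning strategy for Adam on words outside $L$, routes the play through the same state $q$ after $u$ (by the same uniformity argument, now on Adam's side), and its residual from $q$ witnesses $w\notin L(\A^q)$. Combining the two directions gives $L(\A^q)=L(C)$.

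The main obstacle I anticipate is showing that \emph{one and the same} state $q$ serves as the reached state for \emph{both} players' strategies, and that this is well defined independently of the chosen representative $u\in C$ and of the continuation $w$. The subtlety is that $\tau_E$ and $\tau_A$ are separate strategies resolving, respectively, the nondeterminism and the universality, so I must check that after reading $u$ the product position they jointly determine genuinely bottoms out at a single state of $Q$ rather than at an unresolved boolean formula; this is where assuming a normal form for the transition conditions and tracking the alternation depth (as in the proof of Proposition~\ref{prop:TransitionConditionFormat}) is essential. If different representatives of $C$ lead to syntactically different states, that is harmless as long as each such state has sublanguage $L(C)$, which the two-sided argument above already guarantees; the delicate point is merely the existence of at least one such state, which the finiteness of the strategy memory and the resolution of the formula down to $Q$ secure.
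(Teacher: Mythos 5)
Your proposal is correct and takes essentially the same approach as the paper: the paper's proof likewise defines $q(u)$ as the state reached when Eve's and Adam's history-deterministic strategies are played jointly along $u$, and proves $L(\A^{q(u)})=L(C(u))$ by the same two-sided argument, with the opposing player following his or her history-deterministic strategy on the prefix and then playing freely on the suffix (phrased there in contrapositive form, as contradictions with the history-determinism of $\tau_E$ and $\tau_A$). The finite-memory and DNF normalizations you invoke are harmless but unnecessary, since the product game resolves any transition condition down to a single state in finitely many steps regardless of its syntactic form.
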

\begin{proof}
Let $\tau$ and $\eta$ be history-deterministic strategies of $\A$ for Eve and Adam, respectively. For every finite word $u$, let $C(u)$ be the Myhill-Nerode equivalence class of $u$, and $q(u)$ be the state that $\A$ reaches when running on $u$ along $\tau$ and $\eta$. We claim that $L(\A^{q(u)}) = L(C(u))$. 

Indeed, if there is a word $w\in L(C(u)) \setminus L(\A^{q(u)})$ then Adam wins the model-checking game $\Sgame{uw}{\A}{\tau}$, by playing according to $\eta$ until reaching $q(u)$ over $u$ and then playing unrestrictedly for rejecting the $w$ suffix, contradicting the history-determinism of $\tau$. 

Analogously, if there is a word $w\in  L(\A^{q(u)}) \setminus L(C(u))$ then Eve wins the model-checking game $\Sgame{uw}{\A}{\eta}$, by playing according to $\tau$ until reaching $q(u)$ over $u$ and then playing unrestrictedly for accepting the $w$ suffix, contradicting the history-determinism of $\eta$. 
\end{proof}

The insuccinctness of GFG AFAs and GFG AWWs directly follows.

\begin{theorem}\label{thm:AfaAndAww}
For every GFG AFA or GFG AWW $\A$, there is an equivalent DFA or DWW $\A'$, respectively, such that the number of states in $\A'$ is not more than in $\A$.
\end{theorem}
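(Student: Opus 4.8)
The plan is to let $\A'$ be a \emph{minimal} DFA (in the AFA case) or a minimal DWW (in the AWW case) recognizing $L(\A)$, and to show that $\A'$ has no more states than $\A$. Both quantities will be compared through the number of Myhill-Nerode equivalence classes of $L(\A)$: I will argue that $\A$ contains at least one state per class, whereas $\A'$ has exactly one state per class, which immediately gives the desired inequality.

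First I would guarantee that a minimal $\A'$ of the required type exists, i.e.\ that $L(\A)$ is DFA- (resp.\ DWW-) recognizable. In the finite-word case this is immediate: every AFA recognizes a regular language, and the Myhill-Nerode theorem yields a unique minimal DFA with exactly one state per equivalence class, so the class count is finite. In the weak case I would invoke the expressiveness result of Section~\ref{sec:Expressiveness}: weak acceptance is preserved under memory-product, so the GFG AWW $\A$ determinizes into an equivalent DWW; hence $L(\A)$ is DWW-recognizable, its right-congruence has finitely many classes, and by the property that a minimal DFA or DWW for a language $L$ has exactly one state per Myhill-Nerode class of $L$, the minimal DWW $\A'$ realizes exactly one state per class.

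The heart of the argument is a lower bound on the number of states of $\A$, supplied directly by Lemma~\ref{lem:MyhillNerode}. For each Myhill-Nerode class $C$ of $L(\A)$, the lemma produces a state $q_C$ of $\A$ with $L(\A^{q_C}) = L(C)$. I would then observe that the map $C \mapsto q_C$ is injective: distinct classes $C \neq C'$ satisfy $L(C) \neq L(C')$, so $L(\A^{q_C}) \neq L(\A^{q_{C'}})$ and therefore $q_C \neq q_{C'}$. Consequently $\A$ has at least as many states as there are Myhill-Nerode classes of $L(\A)$, which is precisely the number of states of $\A'$; the number of states in $\A'$ is therefore at most the number of states in $\A$.

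The proof is short because Lemma~\ref{lem:MyhillNerode} carries the genuine content. The only points requiring care are the appeal to the expressiveness theorem to ensure DWW-recognizability (so that the one-state-per-class property for minimal DWWs applies in the weak case) and the verification that the right-congruence has finitely many classes; I do not expect any real obstacle beyond correctly assembling these ingredients.
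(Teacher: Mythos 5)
Your proof is correct and takes essentially the same approach as the paper's: Lemma~\ref{lem:MyhillNerode} supplies at least one state of $\A$ per Myhill-Nerode class, while a minimal DFA/DWW (by \cite{Lod01} in the weak case) has exactly one state per class, giving the inequality. The details you spell out---DWW-recognizability of $L(\A)$ via the expressiveness theorem, and injectivity of the class-to-state map---are left implicit in the paper's two-line proof, but they are the same argument.
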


\begin{proof}
The argument below corresponds to a DWW $\A$, and stands also for a DFA $\A$.
 
By \cite{Lod01}, a minimal DWW for a language $L(\A)$ has a single state for every Myhill-Nerode class of $L(\A)$. By Lemma~\ref{lem:MyhillNerode},  $\A$ has at least one state for each such class, from which the claim follows.
\end{proof}

As opposed to weak automata, minimal deterministic B\"uchi and co-B\"uchi automata do not have the Myhill-Nerode classification, and indeed, it was shown in \cite{KS15} that GFG NCWs can be exponentially more succinct than DCWs. 

We show that GFG ACWs are also only singly-exponential more succinct than DCWs. We translate a GFG ACW $\A$ to an equivalent DCW $\D$ in four steps: i) Dualize $\A$ to a GFG ABW $\B$; ii)  Translate $\B$ to an equivalent NBW, having an $O(3^n)$ state blow-up \cite{MH84,BKR10}, and prove that the translation preserves GFGness; iii) Translate $\B$ to an equivalent DBW $\C$, having an additional quadratic state blow-up \cite{KS15}; and iv) Dualize $\C$ to a DCW $\D$.

The main difficulty is, of course, in the second step, showing that the translation of an ABW to an NBW preserves GFGness. For proving it, we first need the following key lemma, stating that in a GFG ABW in which the transition conditions are given in DNF, the history-deterministic strategies can only use the current state and the prefix of the word read so far, ignoring the history of the transition conditions. 


\begin{lemma}\label{lem:GfgAbwStrategies}
Consider an ABW $\A$ with transition conditions in DNF and history-deterministic nondeterminism. Then Eve has a  strategy $\tau: \pro{\Sigma^*}{Q}\rightarrow \BP(Q)$ (and not only a strategy $(\pro{\Sigma}{\BP(Q)} )^* \rightarrow \BP(Q)$),
such that for every word $w\in L(\A)$, Eve wins $\Sgame{w}{A}{\tau}$.
\end{lemma}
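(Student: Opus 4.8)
The plan is to recast the claim as a positional-determinacy statement. By hypothesis the nondeterminism in $\A$ is history-deterministic, so by Lemma~\ref{lem:letter-hd} Eve wins her letter game, and I want to extract from this a winning strategy that depends only on the pair (word read so far, current $\A$-state). Since the transition conditions are in DNF, after reading a letter $a$ from a state $q$ the formula $\delta(q,a)=\bigvee_j C_j$ is a disjunction of conjunctions, and Eve's only genuine choice is which disjunct $C_j\in\BP(Q)$ to commit to (Adam then selects a conjunct, i.e.\ the next state). A full letter-game history records the whole sequence of states $q_0,\dots,q_i$ together with the intermediate navigation inside each DNF formula; the content of the lemma is that Eve can forget everything except $u$ and the current $q_i$. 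The two features that make this possible are that the Büchi (hence prefix-independent) acceptance is insensitive to the finite prefix $q_0\dots q_{i-1}$, and that the intra-letter navigation is local. The hurdle is to turn these intuitions into a single strategy of the right type, since naively splicing together the continuations of a fixed winning strategy from different histories reaching the same $(u,q)$ yields a ``Frankenstein'' play whose acceptance is not obvious.

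To avoid that splicing problem I would argue via finite-arena determinacy rather than by direct surgery. Since $L(\A)$ is $\omega$-regular, fix a deterministic parity automaton $D=(\Sigma,D,d_0,\delta_D,\mathrm{par}_D)$ recognizing $\Sigma^\omega\setminus L(\A)$, whose run is driven only by the letters. Form the synchronized product of the letter-game arena with $D$ over the letter component: a position becomes a pair $(b,d)$ with $b$ a transition-condition (sub)formula of some $\delta(q,a)$ and $d\in D$, where $d$ advances by $\delta_D$ exactly when a letter is consumed and is unchanged during intra-letter navigation. The crucial point is that this product arena is \emph{finite}: the formulas $b$ range over the finitely many subformulas occurring in $\delta$, and $D$ is finite. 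Eve's letter-game objective ``the generated word is not in $L(\A)$, or the Büchi condition on the visited $Q$-states holds'' translates, on this arena, into ``$\mathrm{par}_D$ accepts the $d$-component, or $\alpha$ is satisfied infinitely often on the $q$-component.'' Each disjunct is a parity (in particular Rabin) condition on the infinity set of the play, and a union of two Rabin conditions is again Rabin, so Eve's objective is a Rabin condition for her.

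Finite Rabin games are positionally determined for the player whose objective is the Rabin condition, so Eve has a positional winning strategy $\sigma$ on the product arena, i.e.\ one that depends only on the current vertex $(b,d)$. Reading off, at each state-vertex $(q,d)$, the disjunct of $\delta(q,a)$ that $\sigma$ selects yields a map whose input is $(q,d)$; and since $d$ is a deterministic function $d(u)$ of the word $u$ read so far, this is in particular a function $\tau:\pro{\Sigma^*}{Q}\to\BP(Q)$ (in fact using only the finite memory $D$). For any $w\in L(\A)$ the first disjunct of the objective is false, so $\sigma$ being winning forces the Büchi condition along every play consistent with it, which is exactly the statement that Eve wins $\Sgame{w}{A}{\tau}$. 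I expect the main obstacle to be the bookkeeping of the objective: verifying that the mixed ``not-in-$L(\A)$ or Büchi'' condition is genuinely Rabin for Eve, that the reachable product arena is finite so that standard (finite-arena) positional determinacy applies and the unbounded dependence on $u$ is safely absorbed into $d(u)$, and that the per-vertex disjunction choices of the positional strategy compose into a well-defined single disjunct $\tau(u,q)\in\BP(Q)$.
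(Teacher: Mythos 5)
Your proof is correct, and it takes a genuinely different route from the paper's. The paper performs direct surgery on a given history-deterministic strategy $\xi$: it builds $\tau$ prefix by prefix, and when several $\xi$-histories for the same prefix $u$ end in the same state $q$ it follows the one with maximal $\Yearn$ (time elapsed since the last visit to $\alpha$), ``stopping'' the rest; an infinite rejecting path of the spliced run tree must have been redirected infinitely often, always onto histories that have avoided $\alpha$ at least as long, so K\"onig's lemma produces a rejecting path already in \Sgame{w}{\A}{\xi}, a contradiction. In other words, the paper solves, rather than avoids, exactly the ``Frankenstein'' splicing problem you identified, by means of that tie-breaking rule. Your reduction---product of the letter-game arena with a deterministic parity automaton $D$ for $\Sigma^\omega\setminus L(\A)$, then memoryless determinacy of finite Rabin games---is sound: the product arena is finite, Eve wins it because she wins her letter game (Lemma~\ref{lem:letter-hd}) and $D$ is deterministic, B\"uchi $\lor$ parity is indeed a Rabin condition on the vertices visited infinitely often, and the positional choice at a disjunction vertex $(\delta(q,a),d(ua))$ is a function of $(ua,q)$ as required (one slight slip: the choice is read off at the disjunction vertex, not at the state vertex $(q,d)$, since it also depends on the letter $a$; this is immaterial). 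Comparing the two: the paper's argument is elementary and self-contained, using nothing beyond K\"onig's lemma, whereas yours imports two heavyweight black boxes (determinization of $\omega$-regular languages to deterministic parity, and Emerson--Jutla/Klarlund memoryless determinacy of Rabin games); in exchange, yours is less delicate and yields slightly more, namely a finite-memory $\tau$ factoring through $(d(u),q)$, which is not obvious for the paper's construction. Note that both arguments exploit B\"uchi-ness essentially: the paper through the $\Yearn$/K\"onig argument, you through the fact that the disjunction with a parity condition stays Rabin---for stronger acceptance conditions on $\A$ that closure fails, positionality is lost, and with it your reduction to $(u,q)$-dependence, which is consistent with the paper leaving those cases open.
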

\begin{proof}
Let $\xi: (\pro{\Sigma}{\BP(Q)})^* \rightarrow \BP(Q)$ be a `standard' history-deterministic strategy for Eve. 
Observe that since the transition conditions of $\A$ are in DNF, $\xi$'s domain is $(\pro{\Sigma}{Q})^*$, and the run of $\A$ on $w$ following $\xi$, namely $\Sgame{w}{\A}{\xi}$, is an infinite tree, in which every node is labeled with a state of $\A$.
A history $h$ for $\xi$ is thus a finite sequence of states and a finite word. Let $\Yearn(h)$ denote the number of positions in the sequence of states in $h$ from the last visit to $\alpha$ until the end of the sequence. We shall say ``a history $h$ for a word $u$'' when $h$'s word component is $u$, and ``$h$ ends with $q$'' when the sequence of states in $h$ ends in $q$.

We inductively construct from $\xi$ a strategy $\tau: \pro{\Sigma^*}{Q}\rightarrow \BP(Q)$, by choosing for every history $(u,q) \in \pro{\Sigma^*}{Q}$ some history $h$ of $\xi$, as detailed below, and setting $\tau(u,q) = \xi(h)$.

At start, for $u=\epsilon$, we set $\tau(\epsilon,\iota)=\xi(\epsilon,\iota)$.
In a step in which every history of $\xi$ for $u$ ends with a different state, we set $\tau(u,q) = \xi(h)$, where $h$ is the single history that ends with $q$.

The challenge is in a step in which several histories of $\xi$ for $u$ end with the same state, as $\tau$ can follow only one of them. We define that $\tau(u,q) = \xi(h)$, where $h$ is a history of $\xi$ for $u$ that ends with $q$, and $\Yearn(h)$ is maximal among the histories of $\xi$ for $u$ that ends in $q$. 
Every history of $\xi$ that is not followed by $\tau$ is considered ``stopped'', and in the next iterations of constructing $\tau$, histories of $\xi$ with stopped prefixes will not be considered.

As $\xi$ is a winning strategy for $Eve$, all paths in $\Sgame{w}{A}{\xi}$ are accepting.
Observe that $\Sgame{w}{A}{\tau}$ is a tree in which some of the paths are from $\Sgame{w}{A}{\xi}$ and some are not---whenever a history is stopped in the construction of $\tau$, a new path is created, where its prefix is of the stopped history and the continuation follows the path it was redirected to. We will show that, nevertheless, all paths in $\Sgame{w}{A}{\tau}$ are accepting.

Assume toward contradiction a path $\rho$ of $\Sgame{w}{A}{\tau}$ that is not accepting, and let $k$ be its last position in $\alpha$. The path $\rho$ must have been created by infinitely often redirecting it to different histories of $\xi$, as otherwise there would have been a rejecting path of $\xi$. Now, whenever $\rho$ was redirected, it was to a history $h$, such that $\Yearn(h)$ was maximal. Thus, in particular, this history did not visit $\alpha$ since position $k$. Therefore, by K\"onig's lemma, there is a path $\pi$ of $\Sgame{w}{A}{\xi}$ that does not visit $\alpha$ after position $k$, contradicting the assumption that all paths of $\Sgame{w}{A}{\xi}$ are accepting.
\end{proof}

We continue with showing that the translation of an ABW to an NBW preserves GFGness.

\begin{lemma}\label{lem:AbwToNbw}
Consider an ABW $\A$ for which the nondeterminism is history-deterministic. Then the nondeterminism in the NBW $\A'$ that is derived from $\A$ by the breakpoint (Miyano-Hayashi) construction is also history-deterministic.
\end{lemma}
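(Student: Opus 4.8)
The Miyano-Hayashi (breakpoint) construction turns an alternating Büchi automaton into a nondeterministic one. The idea: a state of the NBW $\A'$ is a pair $(S, O)$ where $S \subseteq Q$ is the set of states the run currently occupies (from unfolding the universal branching) and $O \subseteq S$ is an "obligation" set tracking which branches still owe a visit to $\alpha$. The nondeterminism of $\A'$ resolves the existential (disjunctive) choices of $\A$, while the universal (conjunctive) choices are encoded deterministically into the subset $S$. An accepting state of $\A'$ is one where $O = \emptyset$: each time the obligation set empties, it is reset to the states of $S$ that are non-accepting, and visiting $O = \emptyset$ infinitely often certifies that every branch visits $\alpha$ infinitely often. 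My goal is to show that $\A'$ inherits history-determinism from $\A$.

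**The plan.** I would build a history-deterministic strategy $\tau'$ for Eve's nondeterminism in $\A'$ directly from a history-deterministic strategy for $\A$. The crucial leverage is Lemma~\ref{lem:GfgAbwStrategies}: I may assume the strategy $\tau$ for $\A$ has the simplified type $\pro{\Sigma^*}{Q}\rightarrow \BP(Q)$, depending only on the word read so far and the current state, rather than on the full history of transition conditions. This matters because in $\A'$ a "state" is a set $S\subseteq Q$, and when Eve plays in the letter game on $\A'$ she must, upon reading a letter $a$ from the current macro-state $(S,O)$, simultaneously resolve the existential choice for \emph{every} $q\in S$. Since $\tau$ gives, for each $q\in S$ and the word-prefix $u$ read so far, a disjunctive transition condition $\tau(u,q)\in\BP(Q)$ (in DNF, this picks one disjunct, i.e.\ a conjunctive clause of successor states), Eve can apply $\tau$ component-wise across $S$. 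The successor macro-state $S'$ is then the union of the chosen clauses over all $q\in S$, and $O'$ is updated by the breakpoint rule. The fact that $\tau$ ignores the branching history is exactly what makes this component-wise application well-defined and uniform: the choice for each $q$ depends only on $(u,q)$, not on which universal branch of $\A'$ carried $q$ to the current point.

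**The verification step.** With $\tau'$ defined, I would show it witnesses history-determinism of $\A'$, i.e.\ that for every $w\in L(\A')=L(\A)$, following $\tau'$ yields an accepting run of $\A'$ on $w$. The argument is that the run of $\A'$ induced by $\tau'$ corresponds exactly to the tree $\Sgame{w}{\A}{\tau}$ of $\A$'s run following $\tau$: the macro-state $S$ at each stage is precisely the frontier of occupied states of that tree at the corresponding level, and the obligation set $O$ tracks the non-accepting branches. Since $\tau$ is winning, every branch of $\Sgame{w}{\A}{\tau}$ visits $\alpha$ infinitely often; by König's lemma applied to the (finitely branching) run tree, the breakpoints must recur, so $O=\emptyset$ happens infinitely often and $\A'$'s run is accepting. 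Conversely, if $w\notin L(\A)$ then $\A'$ rejects $w$ regardless, so there is nothing to check on that side.

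**The main obstacle.** I expect the delicate point to be confirming that the breakpoint set $O$ empties infinitely often precisely when every branch of the $\tau$-run tree visits $\alpha$ infinitely often, under the identification of $\A'$'s $\tau'$-run with $\A$'s $\tau$-run tree. This is the classical soundness/completeness of Miyano-Hayashi, but here it must be carried out relative to the \emph{fixed} strategies $\tau$ and $\tau'$ rather than over all runs, and the König's-lemma step requires care that the run tree following $\tau$ is finitely branching (which holds because $Q$ is finite, so each macro-state $S$ has boundedly many states and each unfolds into finitely many successors). The reduction afforded by Lemma~\ref{lem:GfgAbwStrategies} — strategies that depend only on the word prefix and current state — is what removes the otherwise awkward mismatch between $\A$'s history domain $(\pro{\Sigma}{\BP(Q)})^*$ and $\A'$'s macro-state structure, and I would flag this as the conceptual crux that makes the whole lemma go through cleanly.
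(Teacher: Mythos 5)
Your proposal is correct and follows essentially the same route as the paper's proof: both rest on Lemma~\ref{lem:GfgAbwStrategies} to obtain a strategy $\tau:\pro{\Sigma^*}{Q}\rightarrow \BP(Q)$, apply it component-wise over the states of the macro-state $S$ to assemble the nondeterministic choice of $\A'$ (the paper packages this as selecting a transition function $\gamma\in\Delta$), and conclude acceptance by identifying the $\tau'$-run of $\A'$ with the $\tau$-run tree of $\A$. Your explicit K\"onig's-lemma argument for why the breakpoints recur is a welcome sharpening of the paper's terse final sentence, but it is not a different method.
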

\begin{proof}
Consider an ABW $\A=(\Sigma,Q,\iota,\delta,\alpha)$ for which the nondeterminism is history-deterministic.  We write $\overline\alpha$ for $Q\setminus\alpha$.
By Proposition~\ref{prop:TransitionConditionFormat}, we may assume that the transition conditions of $\A$ are given in DNF.

The breakpoint construction \cite{MH84} generates from $\A$ an equivalent NBW $\A'$, by intuitively maintatining a pair $\pair{S,O}$ of sets of states of $\A$, where $S$ is the set of states that are visited at the current step of a run, and $O$ are the states among them that ``owe'' a visit to $\alpha$. A state owes a visit to $\alpha$ if there is a path leading to it with no visit to $\alpha$ since the last ``breakpoint'', which is a state of $\A'$ in which $O=\emptyset$. The accepting states of $\A'$ are the breakpoints.

For providing the formal definition of $\A'$, we first construct from $\delta$ and each letter $a\in\Sigma$, a set $\Delta(a)$ of transition functions $\gamma_1, \gamma_2, \ldots \gamma_k$, for some $k\in\Nat$, such that each of them has only universality and corresponds to a possible resolution of the nondeterminism in every state of $\A$. For example, if $\A$ has states $q_1, q_2,$ and $q_3$, and its transition function $\delta$ for the letter $a$ is $\delta(q_1,a)=q_1 \land q_3 \lor q_2; \delta(q_2,a)=q_2 \lor q_3 \lor q_1$; and $\delta(q_3,a)=q_2 \land q_3 \lor q_1$, then $\Delta(a)$ is a set of twelve transition functions, where $\gamma_1(q_1,a)=q_1 \land q_3; \gamma_1(q_2,a)=q_2$; and $\gamma_1(q_3,a)=q_2 \land q_3$, etc.., corresponding to the possible ways of resolving the nondeterminism in each of the states. 

For convenience, we shall often consider a conjunctive formula over states as a set of states, for example $q_1 \land q_3$ as $\{q_1, q_3 \}$.
For a set of states $S\subseteq Q$, a letter $a$, and a transition function $\gamma\in\Delta(a)$, we define $\gamma(S) = \bigcup_{q\in S} \gamma(q,a)$.

Formally, the breakpoint construction \cite{MH84} generates from $\A$ an equivalent NBW $\A'=(\Sigma,Q',\iota',\delta',\alpha')$ as follows:
\begin{itemize}
\item $Q' = \{ \pair{S,O} \St O \subseteq S \subseteq Q \}$
\item $\iota' = (\{\iota\}, \{\iota\}\cap\overline\alpha)$
\item $\delta':$ For a state $\pair{S,O}$ of $\B$ and a letter $a\in\Sigma$:
	\begin{itemize}
	\item If $O=\emptyset$ then $\delta'(\pair{S,O},a) = \{ \pair{\hat S, \hat O} \St  \mbox{ exists a transition function } \gamma\in\Delta, \mbox{ such that } \hat S=\gamma(S,a) \mbox{ and } \hat O = \gamma(S,a) \cap \overline\alpha \}$
	\item If $O\neq\emptyset$ then $\delta'(\pair{S,O},a) = \{ \pair{\hat S, \hat O} \St  \mbox{ exists a transition function } \gamma\in\Delta, \mbox{ such that } \hat S=\gamma(S,a) \mbox{ and } \hat O = \gamma(O,a) \cap \overline\alpha \}$
	\end{itemize}	
\item $\alpha' = \{ (S, \emptyset) \St S \subseteq Q \}$
\end{itemize}

Observe that the breakpoint construction determinizes the universality of $\A$, while morally keeping its nondeterminism as is. This will allow us to show that Eve can use her history-deterministic strategy for $\A$ also for resolving the nondeterminism in $\A'$.

At this point we need Lemma~\ref{lem:GfgAbwStrategies}, guaranteeing a strategy $\tau: \pro{\Sigma^*}{Q}\rightarrow \BP(Q)$ for Eve. At each step, Eve should choose the next state in $\A'$, according to the read prefix $u$ and the current state $(S,O)$ of $\A'$. Observe that $\tau$ assigns to every state $q\in S$ a set of states $S'=\tau(u,q)$, following a nondeterministic choice of $\delta(q,u)$; Together, all these choices comprise some transition function $\gamma\in\Delta$. Thus, in resolving the nondeterminism of $\A'$, Eve's strategy $\tau'$ is to choose the transition $\gamma$ that is derived from $\tau$. 

Since $\tau$ guarantees that all the paths in the $\tau$-run-tree of $\A$ on a word $w\in L(\A)$ are accepting, the corresponding $\tau'$-run of $\A'$ on $w$ is accepting, as infinitely often all the $\A$-states within $\A'$'s states visit $\alpha$.
\end{proof}

\begin{theorem}\label{thm:GfgAbwToDbw}
The translation of a GFG ABW or GFG ACW $\A$ to an equivalent DBW or DCW, respectively, involves a $2^{\Theta(n)}$ state blow-up.
\end{theorem}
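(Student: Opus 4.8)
The plan is to establish matching upper and lower bounds of $2^{O(n)}$ and $2^{\Omega(n)}$ on the number of states, for both the GFG ABW$\to$DBW and the GFG ACW$\to$DCW translations.

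For the upper bound, I would follow the four-step chain sketched before the theorem. Given a GFG ABW $\A$ with $n$ states, first apply the breakpoint (Miyano--Hayashi) construction to obtain an equivalent NBW of size $O(3^n)$, which remains GFG by Lemma~\ref{lem:AbwToNbw}. Then determinize this GFG NBW into an equivalent DBW with only an additional quadratic blow-up \cite{KS15}, yielding a DBW of size $O(3^{2n}) = 2^{O(n)}$. For a GFG ACW $\A$, I would first dualize it into a GFG ABW $\B$ of the same size, where Proposition~\ref{prop:DualPreservesGFG} guarantees that $\B$ is again GFG; applying the previous construction produces a DBW of size $2^{O(n)}$, and a final dualization turns this DBW into an equivalent DCW of the same size. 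Both translations thus incur at most a $2^{O(n)}$ blow-up.

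For the lower bound, I would transfer the known exponential succinctness gap from the nondeterministic co-B\"uchi setting. By \cite{KS15}, there is a family of languages $(L_n)$ recognized by GFG NCWs with $O(n)$ states whose smallest equivalent DCWs require $2^{\Omega(n)}$ states. Since every nondeterministic automaton is a degenerate alternating one, with its universality vacuously history-deterministic (as noted after Definition~\ref{def:GFLetterGame}, nondeterministic automata are trivially compliant with Adam's letter game), each such GFG NCW is in particular a GFG ACW. This immediately yields the $2^{\Omega(n)}$ lower bound for the GFG ACW$\to$DCW translation. For the GFG ABW$\to$DBW direction I would dualize: the dual of the GFG NCW for $L_n$ is a universal, hence alternating, B\"uchi automaton for $\overline{L_n}$ of size $O(n)$, which is GFG by Proposition~\ref{prop:DualPreservesGFG}; and since a DBW for $\overline{L_n}$ is exactly a DCW for $L_n$ (same state set, complemented acceptance), any such DBW must have $2^{\Omega(n)}$ states.

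Combining the two bounds gives the $2^{\Theta(n)}$ blow-up for both GFG ABW and GFG ACW. The conceptually substantive work has already been carried out in the preceding lemmas: the genuine obstacle is Lemma~\ref{lem:AbwToNbw} (that the breakpoint construction preserves GFGness), which itself rests on the strategy-simplification of Lemma~\ref{lem:GfgAbwStrategies} allowing history-deterministic strategies to depend only on the read prefix and the current state. Given those results, the proof of this theorem reduces to assembling the chain of translations for the upper bound and to the observation that the established nondeterministic lower bounds transfer unchanged to the alternating setting via the degenerate-alternation and dualization arguments.
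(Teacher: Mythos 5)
Your proposal is correct and follows essentially the same route as the paper's own proof: the upper bound via Lemma~\ref{lem:AbwToNbw} (breakpoint preserves GFGness) followed by the quadratic GFG NBW determinization of \cite{KS15}, with dualization and Proposition~\ref{prop:DualPreservesGFG} handling the co-B\"uchi case, and the lower bound transferred from the GFG NCW determinization bound of \cite{KS15} via degenerate alternation and dualization. Your write-up is somewhat more explicit than the paper's (e.g., spelling out that a DBW for $\overline{L_n}$ dualizes to a DCW for $L_n$ of the same size), but the decomposition and key ingredients are identical.
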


\begin{proof}
The lower bound follows from \cite{KS15}, where it is shown that determinization of GFG NCWs is in $2^{\Omega(n)}$. It directly generalizes to  GFG ACWs, and by dualization to GFG ABWs: Given a GFG ACW $\A$, we can dualize it to an ABW $\B$, which is also GFG by Proposition~\ref{prop:DualPreservesGFG}. Then, we can determinize $\B$ to a DBW $\D$ and dualize the latter to a DCW $\C$ equivalent to $\A$.

The upper bound follows from Lemma~\ref{lem:AbwToNbw}, getting an $O(3^n)$ state blow-up for translating a GFG ABW to an equivalent GFG NBW, and then another quadratic state blow-up, due to \cite{KS15}, from GFG NBW to DBW. For determinizing a GFG ACW, we have the same result due to dualization and Proposition~\ref{prop:DualPreservesGFG}.
\end{proof}
\section{Conclusions}

\Subject{Alternating GFG is the sum of its parts} 
Throughout our five definitions of GFG for alternating automata, a common theme prevails: each definition can be divided into a condition for nondeterminism and one for universality, and their conjunction guarantees good-for-gameness. For example, it suffices for an automaton to compose with both universal automata and nondeterministic automata for it to compose with alternating automata, even alternating tree automata. In other words, GFG nondeterminism and universality cannot interact pathologically to generate alternating automata that are not GFG, and neither can they ensure GFGness without each being GFG independently. This should facilitate checking GFGness, as it can be done separately for universality and nondeterminism.

\Subject{Between words, trees, games, and automata}
In the recent translations from alternating parity word automata into weak automata \cite{BL18,DJL19}, the key techniques involve adapting methods that use \textit{finite one-player games} to process \textit{infinite} structures that are in some sense between words and trees, and use these to manipulate alternating automata. These translations implicitly depend on the compositionality that enable the step from asymmetrical one-player games, i.e. trees, to alternating automata. Studying good-for-gameness provides us with new tools to move between words, trees, games, and automata, and better understand how nondeterminism, universality, and alternations interact in this context.

\bibliography{gfg}

\FullVersion{
\newpage

\appendix

\section{Additional Proofs}

}

\end{document}